\newcommand{\ie}{{\em i.e. }}
\newcommand{\eg}{{\em e.g. }}
\newtheorem{theorem}{Theorem}
\begin{document}

\title{Sub-Nyquist Radar via Doppler Focusing}

\author{Omer Bar-Ilan and Yonina C. Eldar,~\IEEEmembership{Fellow,~IEEE}
\thanks{The authors are with the Department of Electrical Engineering, Technion--Israel Institute of Technology, Haifa, Israel 32000 (phone: +972-4-8294798, +972-4-8294682, fax: +972-4-8295757, e-mail: omerba@tx.technion.ac.il, yonina@ee.technion.ac.il). }}

\maketitle

\begin{abstract}
We investigate the problem of a monostatic pulse-Doppler radar transceiver trying to detect targets, sparsely populated in the radar's unambiguous time-frequency region. Several past works employ compressed sensing (CS) algorithms to this type of problem, but either do not address sample rate reduction, impose constraints on the radar transmitter, propose CS recovery methods with prohibitive dictionary size, or perform poorly in noisy conditions. Here we describe a sub-Nyquist sampling and recovery approach called Doppler focusing which addresses all of these problems: it performs low rate sampling and digital processing, imposes no restrictions on the transmitter, and uses a CS dictionary with size which does not increase with increasing number of pulses $P$. Furthermore, in the presence of noise, Doppler focusing enjoys a signal-to-noise ratio (SNR) improvement which scales linearly with $P$, obtaining good detection performance even at SNR as low as -25dB. The recovery is based on the Xampling framework, which allows reducing the number of samples needed to accurately represent the signal, directly in the analog-to-digital conversion process. After sampling, the entire digital recovery process is performed on the low rate samples without having to return to the Nyquist rate. Finally, our approach can be implemented in hardware using a previously suggested Xampling radar prototype.
\end{abstract}

\begin{IEEEkeywords}
compressed sensing, rate of innovation, radar, sparse recovery, sub-Nyquist sampling, delay-Doppler estimation.
\end{IEEEkeywords}

\section{Introduction}
\label{sec:Introduction}

We consider target detection and parameter estimation in a pulse-Doppler radar system, using sub-Nyquist sampling rates. The radar is a single transceiver, monostatic, narrow-band system. Targets are non-fluctuating point targets, sparsely populated in the radar's unambiguous time-frequency region: delays up to the Pulse Repetition Interval (PRI), and Doppler frequencies up to its reciprocal, the Pulse Repetition Frequency (PRF). We propose a recovery method which can detect and estimate targets' time delay and Doppler frequency, using a linear, non-adaptive sampling technique at a rate significantly lower than the radar signal's Nyquist frequency, assuming the number of targets $L$ is small.

Current state-of-the-art radar systems sample at the signal's Nyquist rate, which can be hundreds of MHz and even up to several GHz. Systems exploiting sub-Nyquist sampling rates benefit from a lower rate analog-to-digital conversion (ADC), which requires less computational power. Moreover, sampling at the Nyquist rate may not always be feasible due to high power consumption, heat dissipation, cost, or other practical considerations. Finally, offline radar systems which record samples for subsequent processing can gain substantial storage capacity reduction if they were to sample at sub-Nyquist rates.

The goal of this work is to present some steps in order to break the link between radar signal bandwidth and sampling rate. We mainly focus on the simpler setting of Swerling-0 point targets with additive white Gaussian noise (AWGN) and ideal hardware, in order to bring forth what we believe are useful ideas and algorithms for sample rate reduction. We also briefly treat target dynamic range and clutter. However, a full analysis of these important issues is left to future work. The sub-Nyquist Xampling (``compressed sampling") \cite{Xampling} method we use is an ADC which performs analog prefiltering of the signal before taking point-wise samples. These compressed samples (``Xamples") contain the information needed to recover the desired signal parameters using compressed sensing (CS) algorithms. This work expands the work in \cite{EliGal}, adding Doppler to the target model and proposing a new method to estimate it. The same sampling technique and hardware that were used in \cite{EliGal, multichannel} will also work here, while the digital processing we suggest is adapted to moving targets, and low signal-to-noise ratio (SNR).

Past works employ CS algorithms to this type of problem, but do not address sample rate reduction and continue sampling at the Nyquist rate \cite{HighResRadar, BlockSparse}. Other works combine radar and CS in order to reduce the receiver's sampling rate, but in doing so impose constraints on the radar transmitter and do not treat noise \cite{CompressiveRadarImaging}, or do not handle noise well \cite{KfirWaheed}. The work in \cite{KfirWaheed} first estimates target delays and then uses these recovered delays to estimate Doppler frequencies and amplitudes, and is referred to as ``two-stage recovery" in subsequent sections. Another line of work proposes single stage CS recovery methods with dictionary size proportional to the product of delay and Doppler grid sizes, making them infeasible for many realistic scenarios \cite{HighResRadar, AdaptiveCSRadar}.

Our approach is based on the observation that the received radar signal can be modeled with $3L$ degrees of freedom (DOF): a delay, Doppler frequency and amplitude for each of the $L$ targets. Signals which can be described with a fixed number of DOF per unit of time are known as Finite Rate of Innovation (FRI) \cite{VetterliFRI} signals. The proposed recovery process is actually a recovery of these DOF from low rate samples. The concept of FRI together with the Xampling methodology enables sub-Nyquist rates using practical hardware \cite{Xampling}.

At the crux of our proposed recovery method is a coherent superposition of time shifted and modulated pulses, the Doppler focusing function $\Phi(t;\nu)$. For any Doppler frequency $\nu$, this function combines the received signals from different pulses so targets with appropriate Doppler frequencies come together in phase. For each sought after $\nu$, $\Phi(t;\nu)$ is processed as a simple one-dimensional CS problem and the appropriate time delays are recovered. The gain from this method is both in terms of SNR and Doppler resolution. For $P$ pulses adding coherently, we obtain a factor $P$ SNR improvement over white noise (which adds incoherently, \ie in power), as will be proved in Section~\ref{sec:DelayDopplerRecovery}. Such an SNR improvement is optimal, it is the improvement obtained by a matched filter (MF), and we show this in Section~\ref{sec:ClassicProcessing}. We also analyze the minimal number of samples required for perfect recovery without noise, and show that Doppler focusing attains this lower bound in terms of number of samples per pulse. In addition, denoting the PRI as $\tau$, we demonstrate in Section~\ref{sec:DopplerFocusing} that the width of the Doppler focus for each $\Phi(t;\nu)$ is $2\pi /P \tau$, meaning that delays of targets separated in Doppler by more than $2\pi/P \tau$ create almost no interference with each other.

The idea of Doppler focusing comes from a similar function in the context of ultrasound beamforming used in \cite{CompressedBeamformingUltrasound}. There, in a method named ``Dynamic focusing", the signal returned to a set of linearly aligned transceivers is focused in a manner similar to how we focus pulses, and the Doppler frequency $\nu$ is replaced by spatial direction $\theta$. In both cases, the advantages of focusing are not lost with sub-Nyquist processing since it can be performed on the low rate Xamples.

Simulations provided in Section~\ref{sec:SimulationResults} show that when sampling at one tenth the Nyquist rate, our Doppler focusing recovery method outperforms both classic MF recovery, described in Section~\ref{sec:ClassicProcessing}, and two-stage CS recovery, described in Section~\ref{sec:PreviousApproaches}. When the SNR reaches -25dB, our method achieves the performance of classic recovery operating at the full Nyquist rate.

The main merits of our proposed method are as follows:
\begin{enumerate}
\item \textbf{Low rate ADC and DSP} -- using Xampling and the proposed recovery method, we are able to acquire the sub-Nyquist samples containing information needed for target recovery, and then digitally recover the unknown target parameters using low rate processing, without returning to the higher Nyquist rate.
\item \textbf{Transmitter compatibility} -- our recovery method does not impose any restrictions on the transmitted signal, provided it meets the assumptions stated in Section~\ref{sec:RadarModel}.
\item \textbf{Scaling with problem size} -- many CS delay-Doppler estimation methods depend upon constructing a CS dictionary with a column for each delay-Doppler hypothesis. For even moderate size problems, this requires a huge dictionary, making them infeasible for many systems, especially real-time ones. Our Doppler focusing based method avoids this problem by separating the Doppler from delay recovery, making each CS delay recovery indifferent to the underlying Doppler.
\item \textbf{Robustness to noise and clutter} -- The SNR achieved using Doppler focusing scales linearly with the number of received pulses $P$, as does an optimal MF, providing good performance in AWGN. Regarding clutter, Doppler focusing includes inherent isolation between signals with different Doppler frequencies, so unless target and clutter have very similar Doppler frequency, target detection is unhindered.
\end{enumerate}

The remainder of this paper is organized as follows. In Section~\ref{sec:RadarModel} we describe the radar model and the assumptions used for simplification. Section~\ref{sec:ClassicProcessing} reviews classic MF processing. We explain the Doppler focusing concept in Section~\ref{sec:DopplerFocusing} and sub-Nyquist delay recovery in Section~\ref{sec:DelayRecovery}. The delay-Doppler recovery method using Doppler focusing is described in Section~\ref{sec:DelayDopplerRecovery}, along with an analysis of noiseless recovery and some practical considerations. We provide a comparison with other CS recovery methods in Section~\ref{sec:PreviousApproaches}. Numerical results are presented in Section~\ref{sec:SimulationResults}.

We denote vectors by boldface lower case letters, \eg $\mathbf{c}$, and matrices by boldface capital letters, \eg $\mathbf{A}$. We say a vector $\mathbf{c}$ is $L$-sparse if $\| \mathbf{c} \|_0 \leq L$, \ie at most $L$ of its elements are nonzero. The $n$th element of a vector is written as $\mathbf{c}_n$, and the $ij$th element of a matrix is denoted by $\mathbf{A}_{ij}$. Non-boldface variables represent scalars or functions, where continuous functions are denoted with round parentheses, \eg $x(t)$ and discrete functions with square parentheses, \eg $h[n]$. The cardinality of a set $\kappa$ is denoted $|\kappa|$.

\section{Radar Model}
\label{sec:RadarModel}

We consider a radar transceiver that transmits a pulse train
\begin{equation}
\label{eq:TxSignal}
x_T(t)=\sum_{p=0}^{P-1}h(t-p\tau), \; \quad 0\leq t \leq P\tau
\end{equation}
consisting of $P$ equally spaced pulses $h(t)$. The pulse-to-pulse delay $\tau$ is referred to as the PRI, and its reciprocal $1/\tau$ is the PRF. The entire span of the signal in \eqref{eq:TxSignal} is called the coherent processing interval (CPI). The pulse $h(t)$ is a known time-limited baseband function with continuous-time Fourier transform (CTFT) $H(\omega)=\int_{-\infty}^{\infty} h(t) e^{-j \omega t} dt$. We assume that $H(\omega)$ has negligible energy at frequencies beyond $B_h/2$ and we refer to $B_h$ as the bandwidth of $h(t)$. The target scene is composed of $L$ non-fluctuating point targets (Swerling-0 model, see \cite{RadarHandbook}), where we assume that $L$ is known, although this assumption can easily be relaxed. The pulses reflect off the $L$ targets and propagate back to the transceiver. Each target $\ell$ is defined by three parameters: a time delay $\tau_\ell$, proportional to the target's distance from the radar; a Doppler radial frequency $\nu_\ell$, proportional to the target-radar closing velocity; and a complex amplitude $\alpha_\ell$, proportional to the target's radar cross section (RCS), dispersion attenuation and all other propagation factors. We limit ourselves to defining targets in the radar's radial coordinate system.

Throughout, we make the following assumptions on the targets' location and motion, which leads to a simplified expression for the received signal. For this we define a few auxiliary parameters that were previously unneeded. Denote the radar's carrier frequency as $f_c$, the pulse time support as $T_p$, and the speed of light as $c$. We use the time-distance equivalence $r_\ell=c\tau_\ell/2$ and the non-relativistic Doppler radial frequency-velocity equivalence $\dot{r}_\ell=\nu_\ell c/4 \pi f_c$.
\begin{enumerate}
\item[\textbf{A1.}] ``Far targets" –- target-radar distance is large compared to the distance change during the CPI which allows for constant $\alpha_\ell$.
\begin{equation}
\dot{r}_\ell P \tau \ll r_\ell \Rightarrow \nu_\ell \ll 2 \pi f_c \tau_\ell / P \tau.
\end{equation}
\item[\textbf{A2.}] ``Slow targets" -– small target velocity allows for constant $\tau_\ell$ during the CPI and constant Doppler phase during pulse time $T_p$.
\begin{equation}
2 \dot{r}_\ell B_h / c \ll 1/P \tau \Rightarrow \nu_\ell \ll 2 \pi f_c / P \tau B_h.
\end{equation}
This inequality is termed the ``narrowband assumption" in radar nomenclature, since it gives an upper limit on the radar signal's bandwidth.
\item[\textbf{A3.}] ``Small acceleration" –- target velocity remains approximately constant during the CPI allowing for constant $\nu_\ell$.
\begin{equation}
\ddot{r}_\ell P \tau \ll c / 2 f_c P \tau \Rightarrow \ddot{r}_\ell \ll c / 2 f_c (P \tau)^2.
\end{equation}
\end{enumerate}

Although these assumptions may seem hard to comply with, they all rely on slow ``enough" relative motion between the radar and its targets. Radar systems tracking people, ground vehicles and sea vessels usually comply quite easily. For example, consider a $P$=100 pulse radar system with PRI $\tau$=100$\mu$sec, pulse width $T_p$=1$\mu$sec, bandwidth $B_h$=30MHz and carrier frequency $f_c$=3GHz, tracking cars traveling up to 120km/hour. The maximal distance change over the CPI is approximately 0.33m, so if the targets' minimal distance from the radar is a few meters, then $\textbf{A1.}$ is satisfied. As for $\textbf{A2.}$, the maximal Doppler frequency is approximately 667Hz, which is much smaller than both $f_c/P\tau B_h$=10KHz and $1/T_p$=1MHz. An extreme acceleration of 10m/sec$^2$ would cause a velocity change of 0.1m/sec over the CPI, easily satisfying $\textbf{A3.}$. As for airborne targets, care must be taken to ensure compliance.

Based on these three assumptions, we can write the received signal as
\begin{equation}
\label{eq:x_received}
x(t) = \sum_{p=0}^{P-1} \sum_{\ell=0}^{L-1} \alpha_\ell h(t - \tau_\ell - p \tau) e^{-j \nu_\ell p \tau}.
\end{equation}
It will be convenient to express the signal as a sum of single frames
\begin{equation}
x(t) = \sum_{p=0}^{P-1} x_p(t)
\end{equation}
where
\begin{equation}
\label{eq:x_p}
x_p(t) = \sum_{\ell=0}^{L-1} \alpha_\ell h(t - \tau_\ell - p \tau) e^{-j \nu_\ell p \tau}.
\end{equation}
In reality $x(t)$ will be contaminated by AWGN. The main source of this noise is receiver thermal noise, but it also represents other wideband analog distortions and imperfections common to radio-frequency (RF) hardware. Another important source of distortion is clutter, originating from large objects in the targets' vicinity reflecting the transmitted signal. We take both into account in our analysis in Section~\ref{sec:DelayDopplerRecovery} and simulations in Section~\ref{sec:SimulationResults}.

%Other sources of interference or distortion such as clock jitter and phase noise, channel imbalance, electronic counter measures (\eg chaff, jamming) will be taken into account in future works.

Since the transmitted signal \eqref{eq:TxSignal} is a finite periodic pulse train, it is invariant to the transformation $\tau_\ell \rightarrow \tau_\ell + k_1 \tau, \,\, \nu_\ell \rightarrow \nu_\ell + 2\pi k_2 / \tau$ where $k_1,k_2 \in \mathbb{Z}$, except on its boundaries. Therefore the radar's unambiguous time-frequency region, where it can resolve targets with no ambiguity, is $[0,\tau]\times[-\pi/\tau, \pi/\tau]$ respectively. Targets outside this region will be measured in delay and Doppler frequency modulo $\tau$ and $2\pi/\tau$ accordingly, as with any fixed-PRI pulse radar. To resolve delay or Doppler ambiguity, information between several CPIs must be shared, for example by employing the Chinese remainder theorem \cite{ChineseRemainder}. We make the following further assumptions on targets' delay and Doppler:
\begin{enumerate}
\item[\textbf{A4.}] No time ambiguity: $\{\tau_\ell \in I \subset [0,\tau) \}_{\ell=0}^{L-1}$ where $I$ is a continuous time interval in $[0,\tau)$, so that $x_p(t)=0, \forall t \notin [p\tau,(p+1)\tau]$.
\item[\textbf{A5.}] No Doppler ambiguity: $\{\nu_\ell \in [-\pi/\tau,\pi/\tau) \}_{\ell=0}^{L-1}$.
\item[\textbf{A6.}] The pairs in the set $\{\tau_\ell, \nu_\ell\}_{\ell=0}^{L-1}$ are unique.
\end{enumerate}

Our goal in this work is to accurately detect the $L$ targets, \ie to estimate the $3L$ DOF $\{\alpha_\ell, \tau_\ell, \nu_\ell\}_{\ell=0}^{L-1}$ in \eqref{eq:x_received}, using the least possible number of digital samples.

\section{Classic Pulse-Doppler Processing}
\label{sec:ClassicProcessing}

Classic radar processing samples and processes the received signal at its Nyquist rate $B_h$ using a MF \cite{RadarHandbook}. In modern systems the MF operation is performed digitally, and therefore requires an ADC capable of sampling at $B_h$, which can be hundreds of MHz and even up to several GHz. In order to evaluate our sampling and reconstruction method, we compare it to classic radar processing, which in general consists of the following stages:
\begin{enumerate}
\item \textbf{ADC} -- sample each incoming frame $x_p(t)$ at its Nyquist rate $B_h$, equal to $h(t)$'s bandwidth, creating $x_p[n],0\leq n < N$, where $N=\tau B_h$.
\item \textbf{Matched filter} -- for each $x_p[n]$, create $y_p[n] = x_p[n] \ast h^\ast[-n]$, where $h[n]$ is a sampled version of the transmitted pulse $h(t)$. The time resolution attained in this step is $1/B_h$, corresponding to the width of the autocorrelation of the pulse $h(t)$.
\item \textbf{Doppler processing} -- for each discrete time index $n$, perform a $P$-point DFT along the pulse dimension: $z_n[k] = DFT_P\{y_p[n]\} = \sum_{p=0}^{P-1} y_p[n] e^{-j 2\pi p k/P}$ for $0\leq k < P$. The frequency resolution attained in this step is $2\pi/P\tau$, proportional to the inverse of the total coherent processing interval.
\item \textbf{Delay-Doppler map} -- stacking the vectors $\textbf{z}_n$, and taking absolute value, we obtain a delay-Doppler map $\textbf{Z} = \mbox{abs}[\textbf{z}_0 \; ... \; \textbf{z}_{N-1}] \in \mathbb{R}^{P \times N}$.
\item \textbf{Peak detection} -- a heuristic detection process, where knowledge of number of targets, target power, clutter location, etc. may help discover target positions. For example, if we know there are $L$ targets, then we can choose the $L$ strongest points in the map.
\end{enumerate}

The Doppler processing stage can be viewed as MF in the pulse dimension to a constant radial velocity target. As such, it increases the SNR by $P$ compared to the SNR of a single pulse \cite{Richards, NonCoherentGain}. Since a MF is the linear time-invariant (LTI) system which maximizes SNR, we know a factor $P$ increase is optimal for $P$ pulses. In Section~\ref{sec:DelayDopplerRecovery} we show that the SNR achieved with Doppler focusing also scales linearly with $P$, while other CS methods either achieve sub-linear SNR increase, or require prohibitive computational cost, as shown in Section~\ref{sec:PreviousApproaches}.

Classic processing requires sampling the received signal at its Nyquist rate $B_h$, which is inversely proportional to the system's time resolution. The required computational power is $P$ convolutions of a signal of length $N=\tau B_h$ and $N$ FFTs of length $P$ -- both also proportional to $B_h$. The growing demands for improved estimation accuracy and target separation dictate an ever growing increase in signal's bandwidth. The goal of this work is to \emph{present some concrete steps towards breaking the link between radar signal bandwidth and sampling rate}, and to allow low rate sampling and processing of radar signals, regardless of their bandwidth, retaining the same SNR scaling.

We achieve this goal by utilizing the combination of ideas of FRI and Xampling. Previous papers have already used these complementary concepts together. The work in \cite{multichannel, VetterliFRI} creates a mathematical framework for sub-Nyquist sampling of pulse streams and defines lower bounds on the sampling rate needed for perfect reconstruction. Practical sampling methods achieving these bounds are explained in \cite{EliGal, TurUltrasound, CompressedBeamformingUltrasound} in the context of ultrasound and radar, both without Doppler. Another work \cite{KfirWaheed} investigates the delay-Doppler estimation problem, but recovers the delays and Doppler frequencies in a two-stage process which achieves an SNR increase which is sub-linear in $P$. In Section~\ref{sec:DelayDopplerRecovery} we prove that in noisy scenarios, Doppler focusing recovery is superior to the two-stage method, as it achieves an SNR increase equal to a pulse dimension MF. We verify these results with simulations in Section~\ref{sec:SimulationResults}. Combining Xampling and Doppler focusing, our $3L$ DOF input signal \eqref{eq:x_received} will also enjoy the benefits of sub-Nyquist rate Xampling and accurate digital recovery of the target scene, effectively breaking the link between signal bandwidth and sampling rate.

\section{Doppler Focusing}
\label{sec:DopplerFocusing}

We now introduce and explain the main idea in this paper, called Doppler Focusing. This processing technique uses target echoes from different pulses to create a single superimposed pulse. This improves the SNR for robustness against noise and implicitly estimates targets' Doppler frequency in the process. We point out that stages 2) and 3) in classic processing can be viewed as delay and Doppler processing accordingly. Since they are both LTI, they can be interchanged, performing the DFT before MF. In classic processing, performing MF before DFT decreases computation latency, so most practical systems carry them out in the noted order. However, when using Doppler focusing, the Doppler focusing stage must come before delay estimation and the order can no longer be reversed.

When interchanging steps 2) and 3), the DFT is simply a discrete equivalent of the following time shift and modulation operation on the received signal:
\begin{align}
\label{eq:Phi}
\nonumber \Phi(t;\nu) &= \sum_{p=0}^{P-1} x_p(t+p\tau) e^{j \nu p\tau}\\
\nonumber &= \sum_{p=0}^{P-1} \sum_{\ell=0}^{L-1} \alpha_\ell h(t - \tau_\ell) e^{j (\nu - \nu_\ell) p\tau}\\
&= \sum_{\ell=0}^{L-1} \alpha_\ell h(t - \tau_\ell) \sum_{p=0}^{P-1} e^{j (\nu - \nu_\ell) p\tau}
\end{align}
where we used \eqref{eq:x_p}.

We now analyze the sum of exponents in \eqref{eq:Phi}. For any given $\nu$, targets with Doppler frequency $\nu_\ell$ in a band of width $2\pi/P\tau$ around $\nu$, \ie in $\Phi(t;\nu)'s$ ``focus zone", will achieve coherent integration and an SNR boost of approximately
\begin{equation}
\label{eq:g}
g(\nu | \nu_\ell) = \sum_{p=0}^{P-1} e^{j (\nu - \nu_\ell) p\tau}\overset{|\nu - \nu_\ell| < \pi/P\tau} \cong P
\end{equation}
compared with a single pulse. On the other hand, since the sum of $P$ equally spaced points covering the unit circle is generally close to zero, targets with $\nu_\ell$ not ``in focus" will approximately cancel out. Thus $g(\nu | \nu_\ell) \cong 0$ for $|\nu - \nu_\ell| > \pi/P\tau$, where using assumption \textbf{A5.} we assume $|\nu - \nu_\ell| < \pi / \tau$. See Fig.~\ref{fig:g_w} for an example of $g(\nu | \nu_\ell)$.
\begin{figure}[ht]
\centering
    \includegraphics[width=0.7 \columnwidth]{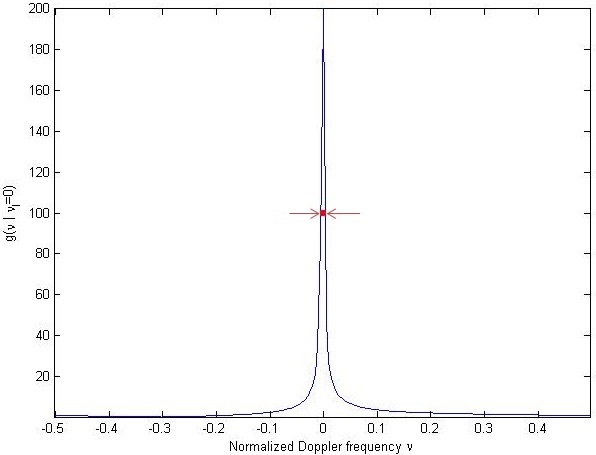}
\caption{Example of $g(\nu | \nu_\ell)$ for $P=200$ pulses and $\nu_\ell=0$. Arrows mark the ``focus zone", \ie $|\nu| < \pi/P\tau$. Frequencies outside focus zone are severely attenuated.}
\label{fig:g_w}
\end{figure}
Therefore we can approximate \eqref{eq:Phi} by
\begin{equation}
\label{eq:PhiApprox}
\Phi(t;\nu) \cong P \sum_{\ell \in \Lambda(\nu)} \alpha_\ell h(t - \tau_\ell)
\end{equation}
where $\Lambda(\nu) = \{ \ell : |\nu - \nu_\ell| < \pi/P\tau \}$.

Instead of trying to estimate delay and Doppler together, we have reduced our problem to delay only estimation for a small range of Doppler frequencies, with increased amplitude for improved performance against noise. To emphasize this, consider the case of trying to detect and estimate parameters for two targets with very closely spaced delays but with different Doppler frequencies (see for example the two helicopters in Fig.~\ref{fig:delay_Doppler_map}). Algorithms whose time resolution is coarser than the targets' delay separation are likely to encounter various problems recovering this target scene, the most likely of which is identification of a single target instead of two. With Doppler focusing we achieve an extra dimension of potential separation, regardless of the underlying delay recovery algorithm, enabling improved recovery performance. Fig.~\ref{fig:delay_Doppler_map} illustrates this concept by showing various targets spanning some delay-Doppler region. When focusing for some $\nu$ only targets in $\nu$'s focus zone (white region) come into view, while all other targets (red region) disappear. In Section~\ref{sec:SimulationResults} we demonstrate this point via simulation.
\begin{figure}[ht]
\centering
    \includegraphics[width=\columnwidth]{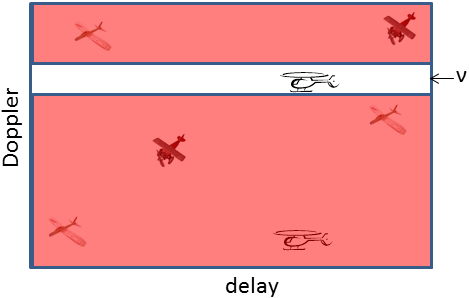}
\caption{Schematic delay-Doppler map. Red area indicates out-of-focus region. Only one target is in focus for current choice of $\nu$.}
\label{fig:delay_Doppler_map}
\end{figure}

To allow our sub-Nyquist recovery method, in the next sections we sample our signal in the time domain but extract frequency domain information. We now show how Doppler focusing can also be performed in the frequency domain, paving the way towards sub-Nyquist Doppler focusing.

Using \eqref{eq:x_p}, and denoting $X_p(\omega)$ as the CTFT of $x_p(t+p\tau)$,
\begin{equation}
X_p(\omega) = H(\omega) \sum_{\ell=0}^{L-1} \alpha_\ell e^{-j \omega \tau_\ell} e^{-j \nu_\ell p \tau}.
\end{equation}
Taking the CTFT of $\Phi(t;\nu)$ as a function of $t$ we obtain:
\begin{align}
\label{eq:FreqFocus}
\nonumber \Psi(\omega;\nu) &= \textup{CTFT}(\Phi(t;\nu)) = \sum_{p=0}^{P-1} X_p(\omega) e^{j \nu p\tau} \\
&= H(\omega) \sum_{\ell=0}^{L-1} \alpha_\ell e^{-j \omega \tau_\ell} \sum_{p=0}^{P-1} e^{j (\nu - \nu_\ell) p\tau}.
\end{align}
All $3L$ of the problem's parameters appear in \eqref{eq:FreqFocus}. Its structure is that of a delay estimation problem, as shown in Section~\ref{sec:DelayRecovery}, combined with the familiar sum of exponents term from \eqref{eq:Phi}.

We have seen that Doppler focusing reduces a delay-Doppler estimation problem to a delay-only estimation problem for a specific Doppler frequency. In the next section we describe delay recovery from sub-Nyquist sampling rates using Xampling. In Section~\ref{sec:DelayDopplerRecovery} we revisit the Doppler focusing concept, combining it with these low rate Xamples to address the delay-Doppler radar problem at sub-Nyquist rates.

\section{Sub-Nyquist Delay Recovery}
\label{sec:DelayRecovery}

The problem of recovering the $2L$ amplitudes and delays in
\begin{equation}
\label{eq:phi}
\phi(t) = \sum_{\ell=0}^{L-1} \alpha_\ell h(t-\tau_\ell), \quad 0 \leq t < \tau
%\psi(\omega) = \sum_{\ell=0}^{L-1} \alpha_\ell e^{-j \omega \tau_\ell}
\end{equation}
from sub-Nyquist samples has been previously studied in \cite{EliGal, multichannel, VetterliFRI, TurUltrasound}. Since Doppler focusing yields such a problem, we now review how Xampling can be used to solve \eqref{eq:phi} at a sub-Nyquist sampling rate.

\subsection{Xampling}

The concept of Xampling, introduced in \cite{Xampling, XamplingBook, XamplingSubspaces}, describes analog-to-digital conversion which acquires samples at sub-Nyquist rates while preserving the ability to perfectly reconstruct the signal. Xampling can be interpreted as ``compressed sampling", in the sense that we are performing data compression inherently in the sampling stage. To do this, we do not simply reduce sampling rate, since this is bound to cause loss of information. Instead, we perform an analog prefiltering operation on our signal and only then sample it, in order to extract the required information for recovery. We now show how the signal's Fourier series coefficients are related to the problem's unknown parameters \cite{multichannel, KfirWaheed, VetterliFRI, TurUltrasound}. We then describe how to acquire these coefficients via Xampling.

Since $\phi(t)$ is confined to the interval $t \in [0,\tau]$, it can be expressed by its Fourier series
\begin{equation}
\phi(t) = \sum_{k \in \mathbb{Z}} c[k] e^{j2\pi k t / \tau}, \quad t \in [0,\tau],
\end{equation}
where
\begin{align}
\label{eq:FourierCoeffs_2L}
\nonumber c[k] &= \frac{1}{\tau} \int_{0}^{\tau} \phi(t) e^{-j 2\pi k t / \tau} dt \\
\nonumber &= \frac{1}{\tau} \sum_{\ell=0}^{L-1} \alpha_\ell \int_{0}^{\tau} h(t-\tau_\ell) e^{-j 2\pi kt / \tau} dt \\
&= \frac{1}{\tau} H(2\pi k / \tau) \sum_{\ell=0}^{L-1} \alpha_\ell e^{-j 2\pi k \tau_\ell / \tau}.
\end{align}
From \eqref{eq:FourierCoeffs_2L} we see that the unknown parameters $\{\alpha_\ell, \tau_\ell\}_{\ell=0}^{L-1}$ are embodied in the Fourier coefficients $c[k]$ in the form of a complex sinusoid problem. For these problems, if there is no noise, $2L$ samples are enough to recover the unknown $\alpha$'s and $\tau$'s \cite{VetterliFRI}, \ie $|\kappa| \geq 2L$. We can solve this problem by using spectral analysis methods which require sampling a consecutive subset of coefficients, such as the annihilating filter \cite{annihilating}, matrix pencil \cite{pencil}, or ESPRIT \cite{ESPRIT}. An alternative approach is to use MUSIC \cite{MUSIC}, which does not require consecutive coefficients. The lower bound on $|\kappa|$ can be achieved only when the noise is negligible and computational complexity is not of concern. When there is substantial noise in the problem, having more than $2L$ coefficients will allow more robust recovery.

Our signals exist in the time domain, and therefore we do not have direct access to $c[k]$. We can use the Direct Multichannel Sampling scheme described in \cite{multichannel} in order to obtain any arbitrary set of Fourier series coefficients. Fig.~\ref{fig:multichannel} demonstrates a Xampling scheme used to directly extract the required Fourier coefficients from the signal. The analog input signal $x(t)$ is split into $|\kappa|$ channels, where in each channel $k$ it is mixed with the harmonic signal $e^{-j 2 \pi k t / \tau}$, integrated over the PRI duration, and then sampled.

In \cite{EliGal} an actual radar system was built using a similar yet more practical technique, where a set of mixers, band-pass filters and low rate ADCs sampled different spectral bands of the signal and the matching Fourier coefficients were created digitally. The radar model there included delay only without Doppler. An alternative Xampling method uses the Sum of Sincs filter as described in \cite{TurUltrasound}. All of these methods can be used to obtain arbitrary Fourier series coefficients. The number of Fourier coefficients extracted per pulse is a design parameter which controls the tradeoff between sampling rate and recovery performance. In our numerical experiments, presented in Section~\ref{sec:SimulationResults}, we demonstrate a Xampling rate one tenth of the Nyquist rate.
\begin{figure}[ht]
\centering
\includegraphics[width=\columnwidth]{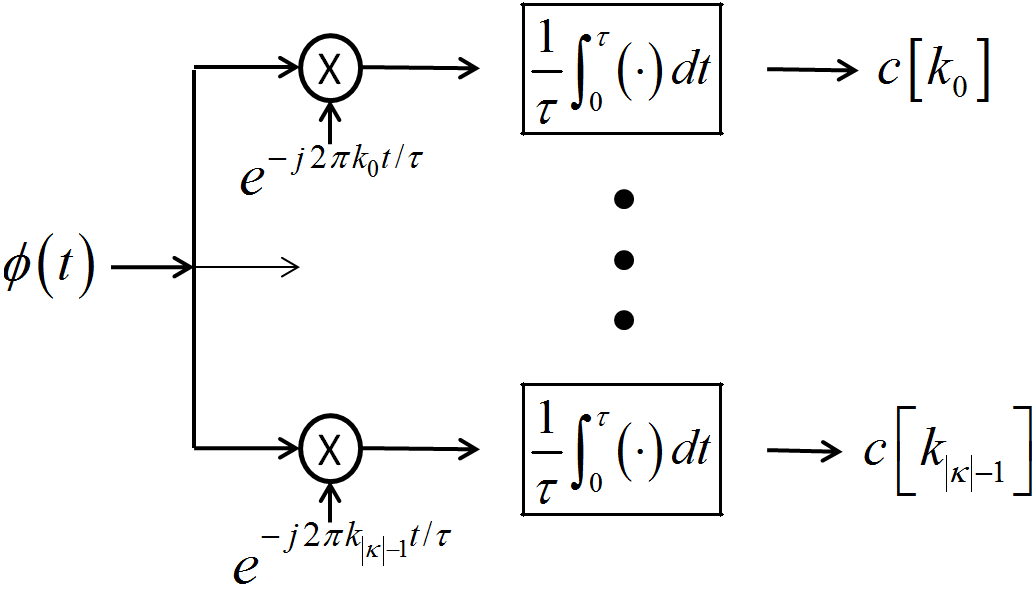}
\caption{Multichannel direct sampling of the Fourier series coefficients, from \cite{multichannel}.}
\label{fig:multichannel}
\end{figure}

\subsection{Compressed Sensing Recovery}

In the previous subsection we showed that $2L$ samples are enough to solve \eqref{eq:phi}, when there is no noise. We now describe a CS-based delay recovery method, operating on the Xamples $c[k]$, which is more robust to noise.

Assume the time delays are aligned to a grid
\begin{equation}
\tau_\ell=n_\ell\Delta_\tau, \quad 0 \leq n_\ell < N_\tau
\end{equation}
where we choose $\Delta_\tau$ so that $N_\tau = \tau/\Delta_\tau$ is an integer. Since in real scenarios target delays are not necessarily aligned to a grid, a local interpolation process around detected grid points can be used to reduce quantization errors. In our simulations in Section~\ref{sec:SimulationResults} we perform a parabolic fit around detected indices in order to improve delay estimate. The analysis in \cite{BasisMismatch} can be used to quantify these ``off-the-grid" errors. Choosing a set of indices $\kappa = \{ k_0, ..., k_{|\kappa|-1}\}$, we define the corresponding vector of Fourier coefficients
\begin{equation}
\label{eq:c_kappa}
\mathbf{c} = [c[k_0] \; ... \; c[k_{|\kappa|-1}]]^T \in \mathbb{C}^{|\kappa|}.
\end{equation}
We can then write \eqref{eq:FourierCoeffs_2L} in vector form as
\begin{equation}
\label{eq:CS_single}
\mathbf{c} = \frac{1}{\tau} \mathbf{H} \mathbf{V} \mathbf{x}
\end{equation}
where $\mathbf{H}$ is a $|\kappa| \times |\kappa|$ diagonal matrix with elements $H(2\pi k_i / \tau)$ and $\mathbf{V}$ is a $|\kappa| \times N_\tau$ Vandermonde matrix with $\mathbf{V}_{mq} = e^{-j 2\pi k_m n / N_\tau}$, \ie it is composed of $|\kappa|$ rows of the $N_\tau \times N_\tau$ DFT matrix. The target delay vector $\mathbf{x} \in \mathbb{C}^{N_\tau}$ is $L$-sparse, where each index $n$ contains the amplitude of a target with delay $n \Delta_\tau$ if it exists, or zero otherwise. Defining the CS dictionary $\mathbf{A} = \frac{1}{\tau}\mathbf{H} \mathbf{V} \in \mathbb{C}^{|\kappa| \times N_\tau}$ we obtain the CS equation
\begin{equation}
\label{eq:CS_single_nu_2L}
\mathbf{c} = \mathbf{A} \mathbf{x}.
\end{equation}
Estimating targets' delays can be carried out by solving \eqref{eq:CS_single_nu_2L} and finding $\mathbf{x}$'s support - any nonzero index $n$ denotes a target with delay $n \Delta_\tau$.

For any set of sampled Fourier coefficients, a variety of CS techniques can be employed for recovery \cite{CompressedBeamformingUltrasound}, for instance orthogonal matching pursuit (OMP) \cite{OMP}, iterative hard thresholding (IHT) \cite{IHT}, or L1 minimization (see \cite{CSTheoryApp} and references within). Choosing the coefficients at random produces favorable conditions for CS, aiding recovery in the presence of noise. When the indices in $\kappa$ are selected uniformly at random, it can be shown that if $|\kappa| \geq cL(\mbox{log}N_\tau)^4$, for some positive constant $c$, then $\mathbf{A}$ obeys the desired Restricted Isometry Property (RIP) with large probability \cite{rudelson2008sparse}. By satisfying the condition for RIP we are able to recover $\mathbf{x}$, using a CS recovery algorithm.

\section{Delay-Doppler Recovery}
\label{sec:DelayDopplerRecovery}

In Section~\ref{sec:DopplerFocusing} we introduced the concept of Doppler focusing, and in Section~\ref{sec:DelayRecovery} we reviewed how to Xample and recover the $2L$ unknowns of the delay estimation problem \eqref{eq:phi}. We now return to our original delay-Doppler problem \eqref{eq:x_received}.

We begin by describing how Xampling can be performed on the multi pulse signal \eqref{eq:x_received}. We then describe Doppler focusing, and analyze two aspects of the algorithm: the effect of multiple pulses on SNR when noise exists, and the minimal number of samples required for perfect recovery without noise. Finally we discuss some practical considerations.

\subsection{Xampling}

Similarly to the Xampling technique of Section~\ref{sec:DelayRecovery} which obtained $c[k]$, we can extend this technique to each of the pulses $x_p(t)$ of the multi-pulse signal \eqref{eq:x_received} to obtain $c_p[k]$. Since $x_p(t)$ is confined to the interval $t \in [p\tau,(p+1)\tau]$, we can replace $t \rightarrow t+p\tau$ and $\alpha_\ell \rightarrow \alpha_\ell e^{-j \nu_\ell p\tau}$ in \eqref{eq:FourierCoeffs_2L} to obtain
\begin{equation}
\label{eq:FourierCoeffs}
c_p[k] = \frac{1}{\tau} H(2\pi k / \tau) \sum_{\ell=0}^{L-1} \alpha_\ell e^{-j \nu_\ell p\tau} e^{-j 2\pi k \tau_\ell / \tau},
\end{equation}
where we used the fact that since both $k,p \in \mathbb{Z}$ we have $e^{-j 2\pi k p} \equiv 1$. Similarly to \eqref{eq:CS_single}, for each pulse $p$ we obtain
\begin{equation}
\label{eq:CS_p}
\mathbf{c}_p = \frac{1}{\tau} \mathbf{H} \mathbf{V} \mathbf{x}_p
\end{equation}

From \eqref{eq:FourierCoeffs} we see that all $3L$ unknown parameters $\{\alpha_\ell, \tau_\ell, \nu_\ell\}_{\ell=0}^{L-1}$ are embodied in the Fourier coefficients $c_p[k]$ in the form of a complex sinusoid problem. The number of Fourier coefficients sampled in each pulse, $|\kappa|$, controls the trade-off between sample rate and robustness to noise.

\subsection{Applying Doppler Focusing and CS Recovery}

Having acquired $c_p[k]$ using Xampling, we now perform the Doppler focusing operation for a specific frequency $\nu$
\begin{align}
\label{eq:PsiDef}
\nonumber & \Psi_\nu[k] = \sum_{p=0}^{P-1} c_p[k] e^{j \nu p \tau} \\
& = \frac{1}{\tau} H(2\pi k / \tau) \sum_{\ell=0}^{L-1} \alpha_\ell e^{-j 2\pi k \tau_\ell / \tau} \sum_{p=0}^{P-1} e^{j (\nu-\nu_\ell) p \tau}.
\end{align}
From \eqref{eq:FreqFocus} we see that $\Psi_\nu[k] = \tau \Psi(\omega;\nu)|_{\omega = 2\pi k / \tau}$.

Following the same arguments as in \eqref{eq:g}, for any target $\ell$ satisfying $|\nu-\nu_\ell| < \pi/P\tau$ we have
\begin{align}
\sum_{p=0}^{P-1} e^{j (\nu-\nu_\ell) p \tau} \cong P.
\end{align}
Therefore, Doppler focusing can be performed on the low rate sub-Nyquist samples:
\begin{equation}
\label{eq:coeff_single}
\Psi_\nu[k] \cong \frac{P}{\tau} H(2\pi k / \tau) \sum_{\ell \in \Lambda(\nu)} \alpha_\ell e^{-j 2\pi k \tau_\ell / \tau}.
\end{equation}
Equation \eqref{eq:coeff_single} is identical in form to \eqref{eq:FourierCoeffs_2L} except it is scaled by $P$, increasing SNR for improved performance with noise. Furthermore, we reduced the number of active delays.

For each $\nu$ we now have a delay estimation problem, which can be written in vector form using the same notations of Section~\ref{sec:DelayRecovery} as
\begin{equation}
\label{eq:CS_single_nu}
\mathbf{\Psi}_\nu = \frac{P}{\tau} \mathbf{H} \mathbf{V} \mathbf{x}_\nu
\end{equation}
where $\mathbf{x}_\nu$ is $L$-sparse and
\begin{equation}
\label{eq:PsiVec}
\mathbf{\Psi}_\nu = [\Psi_\nu[k_0] \; ... \; \Psi_\nu[k_{|\kappa|-1}]]^T \in \mathbb{C}^{|\kappa|}.
\end{equation}
This is exactly the CS problem we have already shown how to solve in Section~\ref{sec:DelayRecovery}. Here it is important to note that our dictionary is indifferent to the Doppler estimation. CS methods which estimate delay and Doppler simultaneously \cite{HighResRadar, AdaptiveCSRadar}, require a dictionary which grows with the number of pulses. Here by separating delay and Doppler estimation, the CS dictionary is not a function of $P$.

The Doppler focusing operation \eqref{eq:PsiDef} is a continuous operation on the variable $\nu$, and can be performed for any Doppler frequency up to the PRF. With Doppler focusing there are no inherent ``blind speeds", \ie target velocities which are undetectable, as occurs with classic Moving Target Indication (MTI) \cite{Richards}. Define the set of Fourier coefficients $C = \{c_p[k]\}_{0 \leq p < P}^{k \in \kappa}$, and $\mathbf{\Psi}_\nu(C)$ as the vector of focused coefficients \eqref{eq:PsiVec} obtained from $C$ using \eqref{eq:PsiDef}. Therefore $\mathbf{x}_\nu(C)$, where again we explicitly emphasize the dependence on the set $C$, can be recovered from $\mathbf{\Psi}_\nu(C)$ for any $\nu$. Since strong amplitudes are indicative of true target existence as opposed to noise, Doppler focusing recovery searches for large values of $|\mathbf{x}_\nu(C)[n]|$ and estimates target delays and Doppler frequencies as $n \Delta_\tau$ and $\nu$ accordingly. After detecting each target, its influence is removed from the set of Fourier coefficients in order to reduce masking of weaker targets and to remove spurious targets created by processing sidelobes. A similar subtraction is performed in many iterative algorithms such as OMP or the Clean Process of \cite{CLEAN}. Detection is performed iteratively until all targets have been detected, if $L$ is known, or until an amplitude threshold is met, if the model order is unknown. Algorithm~\ref{alg:DopplerFocusing} summarizes the Doppler focusing algorithm for the case of known $L$.

\begin{algorithm}
\caption{Doppler Focusing}
\label{alg:DopplerFocusing}
\textbf{Input:} Xamples $C = \{c_p[k]\}_{0 \leq p < P}^{k \in \kappa}$, number of targets $L$ \\
\textbf{Output:} Estimated target parameters $\{\hat{\alpha}_\ell, \hat{\tau}_\ell, \hat{\nu}_\ell\}_{\ell=0}^{L-1}$ \\
\begin{algorithmic}
\STATE \textbf{Initialization:} $R = \{r_p[k]\}_{0 \leq p < P}^{k \in \kappa} \leftarrow C$ \\
\FOR{$\ell = 0$ to $L-1$}
\STATE $(\hat{n}_\ell, \hat{\nu}_\ell) \leftarrow \underset{\substack{0 \leq \hat{n} < N_\tau \\
-\pi/\tau \leq \hat{\nu} < \pi/\tau}}
{\arg\max} |\mathbf{x}_{\hat{\nu}}(R)[\hat{n}]|$ using \eqref{eq:CS_single_nu}
\STATE $\hat{\tau}_\ell \leftarrow \hat{n}_\ell \Delta_\tau$
\STATE $\hat{\alpha}_\ell \leftarrow \mathbf{x}_{\hat{\nu}_\ell}(R)[\hat{n}_\ell]$
\FOR{$k \in \kappa$ and $0 \leq p < P$}
\STATE $r_p[k] \leftarrow r_p[k] - \frac{1}{\tau} H(2\pi k / \tau) \hat{\alpha}_\ell e^{-j \hat{\nu}_\ell p\tau} e^{-j 2\pi k \hat{\tau}_\ell / \tau}$
\ENDFOR
\ENDFOR
\end{algorithmic}
\end{algorithm}

In the delay-only problem of Section~\ref{sec:DelayRecovery}, the model order $L$ was known. Here, since there are $L$ targets but we have no prior knowledge of their distribution in the delay-Doppler plane, each time we solve \eqref{eq:CS_single_nu} we must either estimate the model order $0 \leq L_\nu \leq L$, or take a worst case approach and assume $L_\nu = L$. The problem of estimating the number of sinusoids in a noisy sequence has been studied extensively \cite{GSchwartz, WaxKailath, JJFuchs}. Solving \eqref{eq:CS_single_nu} with an accurate model order can decrease computation time (although estimating model order is also time consuming) and possibly reduce detection of spurious targets. The former option is preferable in higher SNR scenarios when the $L_\nu$ estimates are good, while the latter can be a fall back approach in noisy scenarios when the $L_\nu$ estimates contain significant error. In our simulations, since we wish to eliminate model order errors which influence recovery performance, we employ the worst case approach.

Since $\mathbf{V}$ is a partial Fourier matrix, the problem defined in \eqref{eq:CS_single_nu}, after normalizing by $\mathbf{H}^{-1}$, becomes a problem of recovering frequencies from a sum of complex exponentials. Many methods exist for solving such a spectral analysis problem (see \cite{annihilating} for a review), and they can be used instead of CS. These methods offer different combinations of robustness to noise, minimal sample rate, and sensitivity to grid errors. CS is our method of choice when we are interested in low-SNR scenarios. However, since Doppler focusing is independent of the underlying delay estimation, in different scenarios CS can be exchanged for alternative delay recovery methods. For example, in the upcoming noiseless recovery subsection, when noise is not a concern, we use the annihilating filter approach instead of CS.

\subsection{SNR Analysis}

To analyze the effect of Doppler focusing on SNR, we add noise to \eqref{eq:x_received}:
\begin{equation}
\tilde{x}(t) = x(t) + w(t),
\end{equation}
where $w(t)$ is a zero mean wide-sense stationary random signal with autocorrelation $r_w(s) = \sigma^2 \delta(s)$. The Fourier coefficients in \eqref{eq:FourierCoeffs} then become
\begin{equation}
\tilde{c}_p[k] = c_p[k] + w_p[k],
\end{equation}
where \begin{equation}
w_p[k] = \frac{1}{\tau} \int_{p \tau}^{(p+1) \tau} w(t) e^{-j 2\pi k t / \tau} dt
\end{equation}
is a zero mean complex random variable with variance
\begin{align}
\label{eq:variance}
\nonumber &E \left[ \left| w_p[k] \right|^2 \right] \\
%\nonumber &= \frac{1}{\tau^2} E \left[ \int_{p \tau}^{(p+1) \tau} w(t) e^{-j 2\pi k t / \tau} dt \int_{p \tau}^{(p+1) \tau} w^*(t') e^{j 2\pi k t' / \tau} dt' \right] \\
\nonumber &= \frac{1}{\tau^2} \int_{p \tau}^{(p+1) \tau} dt \int_{p \tau}^{(p+1) \tau} E[w(t)w^*(t')] e^{-j 2\pi k (t-t') / \tau} dt' \\
\nonumber &= \frac{1}{\tau^2} \int_{p \tau}^{(p+1) \tau} dt \int_{p \tau}^{(p+1) \tau} \sigma^2 \delta(t-t') e^{-j 2\pi k (t-t') / \tau} dt' \\
&= \frac{1}{\tau^2} \sigma^2 \int_{p \tau}^{(p+1) \tau} dt = \sigma^2 / \tau.
\end{align}
We can write $\tilde{c}_p[k]$ as the disjoint effect of $L$ targets and noise
\begin{equation}
\label{eq:noisy_coeff}
\tilde{c}_p[k] = \sum_{\ell=0}^{L-1} c_p^\ell[k] + w_p[k]
\end{equation}
where $c_p^\ell[k] = \frac{1}{\tau} H(2\pi k / \tau) \alpha_\ell e^{-j \nu_\ell p\tau} e^{-j 2\pi k \tau_\ell / \tau}$. Therefore we can define the SNR of target $\ell$ in $\tilde{c}_p[k]$ as the power ratio
\begin{align}
\label{eq:SNR1}
\nonumber \Gamma_p^\ell[k] &= \frac{\left| c_p^\ell[k] \right|^2}{E \left[ \left| w_p[k] \right|^2 \right]} \\
\nonumber &= \frac{|\frac{1}{\tau} H(2\pi k / \tau) \alpha_\ell e^{-j \nu_\ell p\tau} e^{-j 2\pi k \tau_\ell / \tau}|^2}{\sigma^2 / \tau} \\
 &= |H(2\pi k / \tau)|^2 \frac{|\alpha_\ell|^2}{\tau \sigma^2}.
\end{align}

Analyzing the SNR in the focused Fourier coefficients using \eqref{eq:PsiDef} we obtain
\begin{align}
\label{eq:SNR2}
\nonumber \tilde{\Psi}_\nu[k] &= \sum_{p=0}^{P-1} \tilde{c}_p[k] e^{j \nu p \tau} \\
\nonumber &= \sum_{p=0}^{P-1} \left( c_p[k] + w_p[k] \right) e^{j \nu p \tau} \\
&= \Psi_\nu[k] + w_\nu[k]
\end{align}
where $w_\nu[k] = \sum_{p=0}^{P-1} w_p[k] e^{j \nu p \tau}$. Using \eqref{eq:coeff_single} we have
\begin{align}
\label{eq:SNR3}
\nonumber & \Psi_\nu[k] + w_\nu[k] \\
\nonumber &\cong \sum_{\ell \in \Lambda(\nu)} \frac{P}{\tau} H(2\pi k / \tau) \alpha_\ell e^{-j 2\pi k \tau_\ell / \tau} + w_\nu[k] \\
&= \sum_{\ell \in \Lambda(\nu)} \Psi_\nu^\ell[k] + w_\nu[k],
\end{align}
where $\Psi_\nu^\ell[k] = \frac{P}{\tau} H(2\pi k / \tau) \alpha_\ell e^{-j 2\pi k \tau_\ell / \tau}$. We then define the focused SNR for target $\ell$ as
\begin{equation}
\label{eq:SNR4}
\Xi_\nu^\ell[k] = \frac{\left| \Psi_\nu^\ell[k] \right|^2}{E \left[ \left| w_\nu[k] \right|^2 \right]}.
\end{equation}

Due to the independence of $w(t)$ in different time intervals, $w_p[k]$ is a discrete time white random sequence with respect to $p$, so $w_\nu[k]$ is a sum of $P$ independent random variables. Therefore, using \eqref{eq:variance}, we have
\begin{equation}
\label{eq:FocusedVariance}
E \left[ \left| w_\nu[k] \right|^2 \right] = \sum_{p=0}^{P-1} |e^{j \nu p \tau}|^2 E \left[ \left| w_p[k] \right|^2 \right] = P \sigma^2 / \tau.
\end{equation}
Substituting \eqref{eq:FocusedVariance} into \eqref{eq:SNR4} we obtain
\begin{align}
\label{eq:SNR5}
\nonumber \Xi_\nu^\ell[k] &= \frac{\left| \frac{P}{\tau} H(2\pi k / \tau) \alpha_\ell e^{-j 2\pi k \tau_\ell / \tau} \right|^2}{P \sigma^2 / \tau} \\
&= P |H(2\pi k / \tau)|^2 \frac{|\alpha_\ell|^2}{\tau \sigma^2} = P \Gamma_p^\ell[k].
\end{align}

It is evident that the focused SNR \eqref{eq:SNR5} is $P$ times greater than before Doppler focusing \eqref{eq:SNR1}. We have obtained a linear SNR improvement with increasing number of pulses, as does an optimal MF.

\subsection{Noiseless Recovery}

The following theorems analyzes the minimal number of samples required for perfect recovery when there is no noise.

\begin{theorem}
\label{thm:noiselessrecovery_continuous}
The minimal number of samples required for perfect recovery of $L$ targets when there is no noise, is at least $4L^2$, with $|\kappa|$ and $P$ at least $2L$ each.
\end{theorem}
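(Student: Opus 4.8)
The plan is to prove the two one-dimensional bounds $|\kappa|\geq 2L$ and $P\geq 2L$ separately and then combine them, noting that the Xampling front end extracts the same index set $\kappa$ from each of the $P$ pulses, so the total number of samples is exactly $|\kappa|\cdot P$. After dividing each $c_p[k]$ in \eqref{eq:FourierCoeffs} by the known nonzero factor $\tfrac{1}{\tau}H(2\pi k/\tau)$, the data take the two-dimensional form $d_p[k]=\sum_{\ell=0}^{L-1}\alpha_\ell u_\ell^{\,k} v_\ell^{\,p}$ with $u_\ell=e^{-j2\pi\tau_\ell/\tau}$ and $v_\ell=e^{-j\nu_\ell\tau}$. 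The crucial structural fact I would exploit is that the delays enter only through the $k$ variable and the Dopplers only through the $p$ variable, so each bound can be obtained by collapsing the other dimension and invoking the one-dimensional FRI result of Section~\ref{sec:DelayRecovery}, which already asserts that $2L$ samples are necessary and sufficient to identify $L$ complex exponentials in the noiseless case.

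For the delay bound I would use a worst-case scene in which all $L$ targets share a common Doppler $\nu_0$; this is admissible under \textbf{A4}--\textbf{A6} because their delays stay distinct. Then $v_\ell\equiv e^{-j\nu_0\tau}$ and the pulse index factors out, $d_p[k]=e^{-j\nu_0 p\tau}\sum_\ell\alpha_\ell u_\ell^{\,k}$, so the $P$ pulses merely multiply the data by a factor $e^{-j\nu_0 p\tau}$ common to any two scenes sharing this Doppler and hence contribute nothing toward resolving the delays; the information is carried entirely by the one-dimensional exponential sum $\sum_\ell\alpha_\ell u_\ell^{\,k}$ sampled on $\kappa$. If $|\kappa|<2L$, the one-dimensional result guarantees two distinct $L$-term sums that agree on every $k\in\kappa$; reattaching the common factor $e^{-j\nu_0 p\tau}$ produces two different valid $L$-target scenes with identical samples for all $p$ and all $k\in\kappa$, contradicting recoverability. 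Hence $|\kappa|\geq 2L$.

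The Doppler bound follows by the mirror-image construction, fixing a common delay $\tau_0$ (again admissible since the Dopplers remain distinct) so that $d_p[k]=e^{-j2\pi k\tau_0/\tau}\sum_\ell\alpha_\ell v_\ell^{\,p}$ and, for each fixed $k\in\kappa$, the data across pulses is the one-dimensional sum $\sum_\ell\alpha_\ell v_\ell^{\,p}$ sampled at $p=0,\dots,P-1$. The one-dimensional bound then forces $P\geq 2L$ and, importantly, shows that enlarging $|\kappa|$ cannot compensate for $P<2L$, so the two conditions are genuinely independent rather than a single product constraint. Multiplying them gives $|\kappa|\cdot P\geq(2L)(2L)=4L^2$, as claimed.

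The step I expect to be the main obstacle is the careful use of one-dimensional non-identifiability inside each reduction: I must verify that the alternative exponential sum produced when fewer than $2L$ samples are available can be realized by exactly $L$ targets whose poles are distinct and lie in the admissible delay (respectively Doppler) range, so that the constructed counterexample is a genuine scene satisfying \textbf{A4}--\textbf{A6} and not a degenerate one with fewer effective targets. This follows from the standard Vandermonde/annihilating-filter theory underlying the cited $2L$ result --- a partial Vandermonde system with fewer than $2L$ rows admits a $2L$-sparse null vector, which splits into two distinct $L$-sparse scenes --- but I would spell out this admissibility check explicitly rather than leave it implicit.
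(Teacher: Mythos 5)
Your proposal is correct and follows essentially the same route as the paper's own proof: pick a worst-case scene in which all $L$ targets share a common Doppler (respectively, a common delay), observe that the other dimension then factors out as a known phase, and invoke the one-dimensional FRI bound of $2L$ samples from \cite{VetterliFRI} to conclude $|\kappa|\geq 2L$ and $P\geq 2L$ separately. Your phrasing as an explicit pair of indistinguishable $L$-target scenes (and the admissibility check on the constructed counterexample) is a slightly more rigorous rendering of the paper's reduction-style argument, but the underlying idea is identical.
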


\begin{proof}
Recall \eqref{eq:FourierCoeffs} and denote normalized delay $q_\ell = 2 \pi \tau_\ell / \tau$ and normalized Doppler frequency $g_\ell = \nu_\ell \tau$, to obtain a more symmetric form
\begin{equation}
\label{eq:FourierCoeffsSymmetric}
c_p[k] = \frac{1}{\tau} H(2\pi k / \tau) \sum_{\ell=0}^{L-1} \alpha_\ell e^{-j (g_\ell p + q_\ell k)}.
\end{equation}
Since there are no constraints on either delay or Doppler frequency, let us examine the case where all targets have identical Doppler $g_\ell = g$:
\begin{equation}
\label{eq:FourierCoeffsSymmetricReduced1}
c_p[k] = \frac{1}{\tau} H(2\pi k / \tau) e^{-j g p} \sum_{\ell=0}^{L-1} \alpha_\ell e^{-j q_\ell k}.
\end{equation}
If it were possible to solve \eqref{eq:FourierCoeffsSymmetricReduced1} with less than $|\kappa| = 2L$ samples, for example by utilizing information from different values of $p$, then we could use this to bring \eqref{eq:FourierCoeffs_2L} to the form of \eqref{eq:FourierCoeffsSymmetricReduced1} by multiplying $c[k]$ by arbitrary values of $e^{-j g p}$. Thus we could solve \eqref{eq:phi} with less than $2L$ samples, in contradiction with \cite{VetterliFRI}. Therefore $|\kappa| \geq 2L$.

Inspecting the case where all targets have the same delay $q_\ell = q$, we obtain:
\begin{equation}
\label{eq:FourierCoeffsSymmetricReduced2}
c_p[k] = \frac{1}{\tau} H(2\pi k / \tau) e^{-j q k} \sum_{\ell=0}^{L-1} \alpha_\ell e^{-j g_\ell p}.
\end{equation}
Applying the same logic which deduced $|\kappa| \geq 2L$ from \eqref{eq:FourierCoeffsSymmetricReduced1}, we infer $P \geq 2L$ from \eqref{eq:FourierCoeffsSymmetricReduced2}.
\end{proof}

\begin{theorem}
\label{thm:noiselessrecovery_grid}
Suppose target Doppler frequencies are aligned to a grid $\{\tilde{\nu}_m = 2 \pi m / \tau M\}_{m=-M/2}^{M/2-1}$, with no restriction on target delays. Then the minimal number of samples required for perfect recovery of $L$ targets when there is no noise, is $2L \min(M,2L)$.
\end{theorem}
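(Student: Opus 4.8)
\section*{Proof proposal for Theorem~\ref{thm:noiselessrecovery_grid}}

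The plan is to establish matching lower and upper bounds on the total sample count $|\kappa|\cdot P$, following the degenerate-case philosophy of Theorem~\ref{thm:noiselessrecovery_continuous} but now tracking the effect of the Doppler grid. First I would argue that the two necessary conditions $|\kappa|\geq 2L$ and $P\geq\min(M,2L)$ hold separately, and then exhibit a concrete sampling pattern with $|\kappa|=2L$ and $P=\min(M,2L)$ that permits exact recovery, so that the minimal product is exactly $2L\min(M,2L)$.

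For the lower bound I would reuse the two reductions \eqref{eq:FourierCoeffsSymmetricReduced1}--\eqref{eq:FourierCoeffsSymmetricReduced2}. The all-equal-Doppler case \eqref{eq:FourierCoeffsSymmetricReduced1} is, after factoring out $e^{-jgp}$, exactly the delay-only problem \eqref{eq:phi} with $L$ distinct delays (distinctness guaranteed by \textbf{A6.}), so \cite{VetterliFRI} forces $|\kappa|\geq 2L$ independently of $M$. The all-equal-delay case \eqref{eq:FourierCoeffsSymmetricReduced2} is the new ingredient: after factoring out $e^{-jqk}$, the Doppler information lives entirely in the $p$-dependence $\sum_\ell \alpha_\ell e^{-jg_\ell p}$, and since the $g_\ell$ now lie on the $M$-point grid this is recovery of an $L$-sparse vector in $\mathbb{C}^M$ from $P$ of its DFT samples. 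Here I would invoke the standard spark dichotomy: the restricted Fourier (Vandermonde) matrix separates all $L$-sparse vectors iff every $2L$ of its columns are independent, which for a partial DFT holds exactly when $P\geq 2L$ and $2L\leq M$; when $2L>M$ the spectrum has more than $M/2$ active grid points and only the full set $P=M$ of DFT samples is injective. Both regimes collapse to $P\geq\min(M,2L)$, and since this Doppler data is common to every $k$ the extra Fourier coefficients cannot help, so the bound is genuine. Multiplying the two independent constraints gives $|\kappa|\cdot P\geq 2L\min(M,2L)$.

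For achievability I would fix $|\kappa|=2L$ consecutive indices with $H(2\pi k/\tau)\neq 0$ and split on the grid size. When $M\leq 2L$, take $P=M$ and run the focusing operation \eqref{eq:PsiDef} at each grid frequency $\tilde\nu_m$; because the $g_\ell$ sit exactly on the DFT grid, the inner sum becomes $\sum_{p=0}^{M-1}e^{j2\pi(m-m_\ell)p/M}=M\delta_{m,m_\ell}$, so \eqref{eq:coeff_single} holds with equality and $\Psi_{\tilde\nu_m}[k]$ isolates only the targets in bin $m$. Each bin is then a delay-only problem with $L_m\leq L$ targets solvable by the annihilating filter from the $2L\geq 2L_m$ available coefficients, and $\sum_m L_m=L$. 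When $M>2L$, take $P=2L$; the grid now guarantees at most $\min(M,L)=L$ distinct Doppler values, so for each $k$ the sequence $c_p[k]$ in \eqref{eq:FourierCoeffs} is a sum of at most $L$ exponentials in $p$, and the $P=2L$ samples suffice for a matrix-pencil/ESPRIT recovery \cite{pencil,ESPRIT} of the common frequencies $\{g_\ell\}$ and their delay-dependent coefficients; a second annihilating-filter stage in $k$ then resolves the delays sharing each recovered Doppler. In both regimes the sample budget is exactly $|\kappa|\cdot P=2L\min(M,2L)$.

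The main obstacle, and the step I would treat most carefully, is the $M>2L$ achievability: unlike the on-grid focusing case there is no exact orthogonal isolation, so I must argue that a single $2L\times 2L$ block of samples determines the at-most-$L$ distinct Dopplers and then the delays. The delicate point is degeneracy of the intermediate coefficients $\beta_j[k]=\frac{1}{\tau}H(2\pi k/\tau)\sum_{\ell:g_\ell=g_j}\alpha_\ell e^{-jq_\ell k}$, which may vanish for particular $k$ and stall a single-column pencil; I would circumvent this by performing the Doppler estimation jointly over all $k\in\kappa$, so that each $g_j$ is identified as long as some column gives it nonzero weight, an assumption that holds generically and is consistent with the noiseless, distinct-pair setting of \textbf{A6.} A secondary point worth verifying is that the lower-bound reductions are legitimate for the sampling set at hand; as in Theorem~\ref{thm:noiselessrecovery_continuous} I rely on the Xampling architecture fixing the measurements to the tensor form $\kappa\times\{0,\dots,P-1\}$, which lets the two factor-wise bounds combine into the stated product.
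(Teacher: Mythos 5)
Your proposal is correct and, on the $M<2L$ branch, essentially coincides with the paper's argument: the paper inverts the overdetermined $P\times M$ system $\mathbf{c}[k]=\frac{1}{\tau}\mathbf{H}\mathbf{F}\boldsymbol{\beta}[k]$ with the pseudoinverse (which, for $P=M$, is exactly your on-grid Doppler focusing, since $\mathbf{F}$ is then the full DFT) and then treats each Doppler bin as a spectral-analysis problem in $k$ needing $|\kappa|\geq 2L$. Where you genuinely go further is in the two places the paper asserts rather than proves. First, for $M\geq 2L$ the paper simply says the answer ``remains $4L^2$,'' leaning on Theorem~\ref{thm:noiselessrecovery_continuous}, which is only a lower bound; your two-stage construction (matrix pencil in $p$ over the at most $L$ distinct Dopplers, then an annihilating filter in $k$ within each Doppler group) supplies the missing achievability, and in fact does not use the grid at all, so it doubles as an achievability proof for the continuous case --- consistent with the paper's offhand appeal to two-dimensional spectral analysis. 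Second, you make the necessity of $P\geq\min(M,2L)$ explicit via the equal-delay degenerate configuration and the rank/spark of the $P\times M$ partial DFT, which the paper leaves implicit. The one point you rightly flag as delicate --- a group coefficient $\beta_j[k]$ vanishing at some $k$ and stalling a single-column pencil --- is fully resolved by your choice of $2L$ consecutive indices in $\kappa$: a nonzero exponential sum with $L_j\leq L$ distinct nodes cannot vanish on $L_j$ consecutive indices, so every present Doppler is seen by the joint (multi-column) annihilator. In short, same skeleton, but your version actually closes the ``minimal'' (as opposed to ``at least'') claim in the theorem statement.
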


\begin{proof}
For any $k \in \kappa$ we use \eqref{eq:FourierCoeffs} to write $c_p[k] = \frac{1}{\tau} H(2\pi k / \tau) \sum_{\ell=0}^{L-1} \beta_\ell^k e^{-j \nu_\ell p\tau}$ where $\beta_\ell^k = \alpha_\ell e^{-j 2\pi k \tau_\ell / \tau}$. We obtain a standard CS problem by writing $\mathbf{c}[k] = \frac{1}{\tau} \mathbf{H} \mathbf{F} \mathbf{\beta}[k]$, where $\mathbf{H}$ is as defined in \eqref{eq:CS_single}, $\mathbf{F} \in \mathbb{C}^{P \times M}$ has elements $\mathbf{F}_{pm} = e^{-j 2 \pi m p / M}$, and $\mathbf{\beta}[k] \in \mathbb{C}^M$ is an $L$-sparse vector.

If $M \geq 2L$ then there is no gain compared with the continuous setting of Theorem~\ref{thm:noiselessrecovery_continuous}, and the minimal number of samples remains $4L^2$.

On the other hand, if $M < 2L$, then for $P \geq M$ the CS system is overdetermined and can be solved with the pseudoinverse $\mathbf{\beta[k]} = (\mathbf{A}^\textit{H} \mathbf{A})^{-1} \mathbf{A}^\textit{H} \mathbf{c[k]}$, where $\mathbf{A} = \frac{1}{\tau} \mathbf{H} \mathbf{F}$. For different values of $k$, each element $m$ of $\mathbf{\beta[k]}$ describes a spectral analysis problem for Doppler frequency $2 \pi m / \tau M$, containing no more than $L$ harmonies. Thus, we require $|\kappa| \geq 2L$ to complete the recovery.
\end{proof}

\begin{theorem}
\label{thm:noiselessrecovery_DopplerFocusing}
Under the conditions of Theorem~\ref{thm:noiselessrecovery_grid}, the minimal number of samples required for perfect recovery of $L$ targets using Doppler focusing is $2LM$, with $|\kappa| \geq 2L$ and $P \geq M$.
\end{theorem}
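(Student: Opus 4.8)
The plan is to exploit the fact that, once the Doppler frequencies lie on the grid $\tilde\nu_m = 2\pi m/\tau M$, the focusing operation \eqref{eq:PsiDef} evaluated at these same grid points becomes an \emph{exact} DFT rather than the approximate separation of \eqref{eq:coeff_single}. First I would restrict the focusing variable $\nu$ in \eqref{eq:PsiDef} to the grid $\{\tilde\nu_m\}$ and take $P=M$, the minimal value allowed. Writing $\nu_\ell = \tilde\nu_{m_\ell}$, the inner sum becomes $\sum_{p=0}^{P-1} e^{j2\pi(m-m_\ell)p/M}$, which for $P=M$ equals $M$ when $m=m_\ell$ and $0$ otherwise, since both $m$ and $m_\ell$ lie in $\{-M/2,\dots,M/2-1\}$ so that $m\equiv m_\ell \pmod M$ forces $m=m_\ell$. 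No approximation is incurred.

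With this exact cancellation, $\Psi_{\tilde\nu_m}[k]$ collapses to $\frac{M}{\tau}H(2\pi k/\tau)\sum_{\ell:\,m_\ell=m}\alpha_\ell e^{-j2\pi k\tau_\ell/\tau}$, which is precisely the delay-only complex-sinusoid problem \eqref{eq:FourierCoeffs_2L} restricted to the targets whose Doppler index equals $m$. Next I would invoke the $2L$-sample FRI bound already used in Theorem~\ref{thm:noiselessrecovery_continuous}: each such per-bin delay problem with $L_m$ targets is solvable from $2L_m$ coefficients, and since in the worst case all $L$ targets may fall into a single bin ($L_m=L$), guaranteeing recovery for every admissible scene requires $|\kappa|\ge 2L$. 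The same $|\kappa|$ coefficients, sampled once per pulse, are reused for every focusing frequency, so the total sample budget is $|\kappa|\cdot P = 2L\cdot M = 2LM$.

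It remains to argue the two necessity claims $|\kappa|\ge 2L$ and $P\ge M$. The first follows from the worst-case argument above together with the contradiction argument of Theorem~\ref{thm:noiselessrecovery_continuous}: a scene with all targets in one Doppler bin reduces \eqref{eq:coeff_single} to the delay-only problem \eqref{eq:phi}, which cannot be solved with fewer than $2L$ samples. For $P\ge M$ I would examine the Dirichlet-kernel form of the focusing sum: $\sum_{p=0}^{P-1} e^{j2\pi(m-m_\ell)p/M}$ vanishes for every nonzero bin difference $m-m_\ell$ only when $P$ is an integer multiple of $M$ (taking $m-m_\ell=1$ already forces $P/M\in\mathbb{Z}$), so exact, leakage-free separation of all $M$ Doppler bins is impossible with $P<M$. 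Choosing the minimal $P=M$ then yields the stated $2LM$.

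The main obstacle I anticipate is pinning down the $P\ge M$ requirement as a genuine limitation of the focusing approach rather than of the problem itself. Unlike the general CS treatment of Theorem~\ref{thm:noiselessrecovery_grid}, which can exploit Doppler sparsity and recover the $L$-sparse $\beta[k]$ from as few as $P=2L<M$ pulses when $M>2L$, Doppler focusing applies the full $M$-point DFT and therefore treats all $M$ bins on an equal footing, forfeiting any sparsity gain in the Doppler dimension. Making this distinction precise---so that $2LM$ is correctly read as the cost of the focusing method and remains consistent with the smaller bound $2L\min(M,2L)$ of Theorem~\ref{thm:noiselessrecovery_grid}---is the delicate part; the DFT-orthogonality and per-bin FRI steps are otherwise routine.
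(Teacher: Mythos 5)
Your proof follows essentially the same route as the paper's: exact DFT orthogonality of the focusing sum when the Doppler frequencies lie on the grid and $P=M$, reduction to $M$ per-bin delay-only spectral-analysis problems, and the worst-case observation that all $L$ targets may occupy a single Doppler bin so each such problem needs $|\kappa|\ge 2L$ coefficients, giving $|\kappa|P=2LM$. Your additional Dirichlet-kernel argument for why $P\ge M$ is forced (exact cancellation of every nonzero bin difference requires $P$ to be a multiple of $M$) makes explicit a necessity claim the paper leaves implicit, and your closing remark correctly reconciles $2LM$ with the smaller general bound $2L\min(M,2L)$ of Theorem~\ref{thm:noiselessrecovery_grid} as the price of forfeiting Doppler-domain sparsity.
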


\begin{proof}
Sample $|\kappa|=2L$ Fourier coefficients for $P=M$ pulses, and choose a continuous set $\kappa$ so that $H(2 \pi k / \tau) \neq 0$ for all $k \in \kappa$. Next perform Doppler focusing for each grid point $\tilde{\nu}_m$ using \eqref{eq:PsiDef}, and solve the resulting delay estimation problem each time using an annihilating filter. In this setting, the analysis of Section~\ref{sec:DopplerFocusing}, and specifically \eqref{eq:g}, \eqref{eq:PhiApprox} are exact:
\begin{align}
\nonumber g(\tilde{\nu}_m | \nu_\ell) &= \sum_{p=0}^{M-1} e^{j (\tilde{\nu}_m - \nu_\ell) p \tau}  \\
\nonumber &= \sum_{p=0}^{M-1} e^{j (2 \pi m / \tau M - 2 \pi m_\ell / \tau M) p \tau} \\
&= \sum_{p=0}^{M-1} e^{j 2 \pi (m - m_\ell) p M} =
\begin{cases}
    M,& \text{if } m = m_\ell \\
    0,& \text{otherwise}
\end{cases}
\end{align}
where $m_\ell \in [-M/2, \ldots, M/2-1]$ satisfies $\nu_\ell = \tilde{\nu}_{m_\ell}$.

Each target $\ell$ appears in exactly one delay estimation problem, for $m_\ell$ satisfying $\nu_\ell = \tilde{\nu}_{m_\ell}$, and completely cancels out for all other $M-1$ problems. Every delay estimation problem \eqref{eq:CS_single_nu}, after normalizing by $\mathbf{H}^{-1}$, becomes a spectral analysis problem with $|\kappa|$ samples. With no prior information regarding target distribution in Doppler, we must assume the worst case of all $L$ targets occupying a single Doppler frequency, so that any spectral analysis problem has no more than $L$ frequencies. When there is no noise, the annihilating filter requires $|\kappa| \geq 2L$ samples for perfect recovery \cite{VetterliFRI}, and the total number of samples comes to $|\kappa|P = 2LM$.
\end{proof}

%The requirement that target Doppler frequencies be aligned to a grid in order to eliminate quantization errors, is not unique to Doppler focusing recovery. Classic radar processing shares this requirement, since it recovers Doppler frequencies from a discretized delay-Doppler map. However, while MF also suffers from quantization errors if target delays are not aligned to a grid, Doppler focusing does not require such discretization.

The minimal rate requirement exists separately on the number of sampled Fourier coefficients $|\kappa|$ and the number of sampled pulses $P$, and not for their product. This shows that in terms of minimal sampling rate, samples in the coefficient dimension $k$ cannot be replaced by samples in the pulse dimension $p$, and vice versa.

These theorems show that the requirement of Doppler focusing for $|\kappa| \geq 2L$ matches the general lower bound on the number of samples required in each pulse. Furthermore, when $M = O(L)$, the number of pulses required for Doppler focusing is within order of magnitude of the lower bound. And finally, the result in Theorem~\ref{thm:noiselessrecovery_continuous} coincides with the minimal sampling rate for two dimensional spectral analysis \cite{TwoDimMatrixPencil}.

\subsection{Practical Considerations}

We now describe a few practical issues, starting with computational efficiency.

If one wishes to perform the $\arg\max$ step of Algorithm~\ref{alg:DopplerFocusing} by probing a uniform grid of $M$ Doppler frequencies, \ie $\{\tilde{\nu}_m = 2 \pi m / \tau M\}_{m=-M/2}^{M/2-1}$, then $\Psi_\nu[k]$ can be created efficiently using a length $M$ DFT or FFT of a length $P$ series:
\begin{align}
\label{eq:FFT_focusing}
\nonumber \Psi_m[k] & \triangleq \Psi_{\tilde{\nu}_m}[k] = \sum_{p=0}^{P-1} c_p[k] e^{j \tilde{\nu}_m p \tau} \\
&= \sum_{p=0}^{P-1} c_p[k] e^{j 2\pi m p / M} = DFT_{M}\{c_p[k]\}.
\end{align}
Algorithm~\ref{alg:DopplerFocusingWithGrid} describes Doppler focusing performed with Doppler frequencies lying on a uniform grid.

\begin{algorithm}
\caption{Doppler Focusing with Grid}
\label{alg:DopplerFocusingWithGrid}
\textbf{Input:} Xamples $C = \{c_p[k]\}_{0 \leq p < P}^{k \in \kappa}$, number of targets $L$, Doppler grid size $M$ \\
\textbf{Output:} Estimated target parameters $\{\hat{\alpha}_\ell, \hat{\tau}_\ell, \hat{\nu}_\ell\}_{\ell=0}^{L-1}$ \\
\begin{algorithmic}
\STATE \textbf{Initialization:} $R = \{r_p[k]\}_{0 \leq p < P}^{k \in \kappa} \leftarrow C$ \\
\FOR{$\ell = 0$ to $L-1$}
\STATE Create $\{ \mathbf{\Psi}_m \}_{m=-M/2}^{M/2-1}$ from $R$ using FFT \eqref{eq:FFT_focusing}.
\STATE $(\hat{n}_\ell, \hat{m}_\ell) \leftarrow \underset{\substack{0 \leq \hat{n} < N_\tau \\
0 \leq \hat{m} < M }}{\arg\max} |\mathbf{x}_{\hat{m}}(R)[\hat{n}]|$ using \eqref{eq:CS_single_nu}
\STATE $\hat{\tau}_\ell \leftarrow \hat{n}_\ell \Delta_\tau$
\STATE $\hat{\nu}_\ell \leftarrow 2 \pi \hat{m}_\ell / \tau M $
\STATE $\hat{\alpha}_\ell \leftarrow \mathbf{x}_{\hat{m}_\ell}(R)[\hat{n}_\ell]$
\FOR{$k \in \kappa$ and $0 \leq p < P$}
\STATE $r_p[k] \leftarrow r_p[k] - \frac{1}{\tau} H(2\pi k / \tau) \hat{\alpha}_\ell e^{-j \hat{\nu}_\ell p\tau} e^{-j 2\pi k \hat{\tau}_\ell / \tau}$
\ENDFOR
\ENDFOR
\end{algorithmic}
\end{algorithm}

Another practical concern is target dynamic range. Since target amplitudes can differ by several orders of magnitude, care must be taken so strong targets do not mask weaker ones. When focusing on some Doppler frequency $\nu$, targets with Doppler frequencies $\nu_\ell$ satisfying $|\nu_\ell - \nu| > \pi/P\tau$ are considered undesirable and we seek to minimize their effect. These targets can be viewed as ``out-of-focus", since they are not matched to $\nu$ and their responses from different pulses do not combine coherently; they will combine in phase for different $\nu$'s satisfying $|\nu_\ell - \nu| < \pi/P\tau$. We can add to \eqref{eq:PsiDef} a user defined window function $w[p], p=0,1,...,P-1$ (\eg Hann, Blackman, etc.) which is designed to mitigate the impact of these out-of-focus targets:
\begin{align}
\label{eq:Windowing}
\nonumber & \Psi_\nu[k] = \sum_{p=0}^{P-1} c_p[k] e^{j \nu p \tau} w[p] \\
& = \frac{1}{\tau} H(2\pi k / \tau) \sum_{\ell=0}^{L-1} \alpha_\ell e^{-j 2\pi k \tau_\ell / \tau} \sum_{p=0}^{P-1} e^{j (\nu-\nu_\ell) p \tau} w[p].
\end{align}

The drawback of windowing is that it increases the frequency's focus zone, potentially including more targets in each delay estimation problem. In Fig.~\ref{fig:windows} we see an example of how windowing can reduce the effect of out-of-focus targets compared with no windowing (constant $w[p]$). For a comprehensive review of windowing function design considerations see \cite{Windows}. When attempting to support a set of targets amplitudes comprising a large dynamic range, a situation fairly common in real scenarios, the focusing operation must be performed with aggressive windowing in order for the strong targets to be sufficiently attenuated. In Section~\ref{sec:SimulationResults} we show a simulation demonstrating the improved dynamic range attained using Doppler focusing.

\begin{figure}[ht]
\centering
    \includegraphics[width=0.8 \columnwidth]{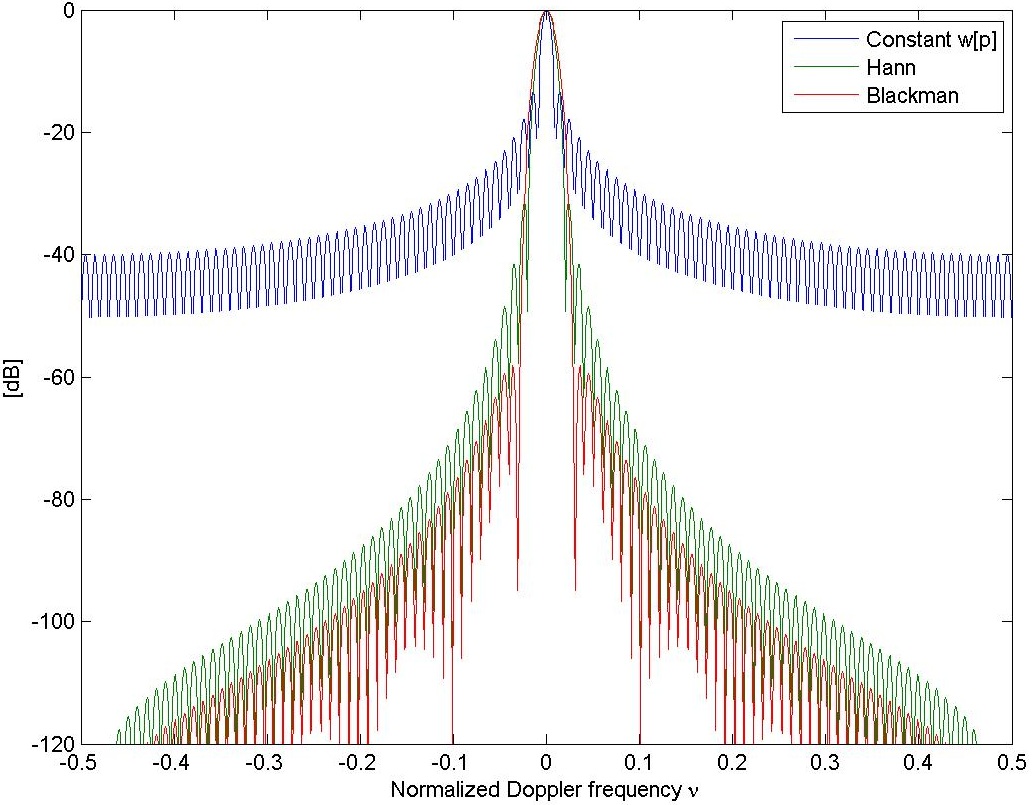}
\caption{DFT of windowing functions $w[p]$ compared with no windowing (constant $w[p]$) for $P=100$ pulses. Attenuation of targets with Doppler frequencies far enough from nominal frequency ($\nu=0$ here) increases significantly with proper windowing. Also, focus zone changes for different $w[p]$'s.}
\label{fig:windows}
\end{figure}

So far we assumed only Swerling-0 point targets. We briefly comment on targets having nonzero Doppler spread - \ie the micro-Doppler phenomenon \cite{Micro-Doppler}. This effect is the modulation of the target's main Doppler frequency caused by motion of the structure of the target around its main trajectory. For example, missile wingtips will exhibit different Doppler frequencies compared with the missile body when the missile is rotating around it central axis. Also, the missile body itself, when undergoing strong vibrations, can generate a spectrum of Doppler frequencies around its central velocity.

Our concern when considering this effect is whether the sparsity assumption remains valid. At the heart of our model is the assumption that the target scene is composed of a small number of targets with discrete Doppler frequencies. Recent works \cite{Micro-Doppler, Micro-DopplerSignatures} show that for cases of micro-Doppler, there are a small number of dominant frequencies in the continuous Doppler spectrum, caused by distinct vibration modes, rotation rates or resonant frequencies. Therefore, in most cases, targets exhibiting micro-Doppler can be treated as a superposition of several closely spaced targets, and the sparse target scene assumption remains valid.

%As a final remark, we would like to revisit the constant Doppler phase during pulse time assumption (\textbf{A2.}) we took in Section~\ref{sec:RadarModel} and momentarily revoke it. In this case \eqref{eq:x_p} becomes
%\begin{equation}
%x_p(t) = \sum_{\ell=0}^{L-1} \alpha_\ell h(t - \tau_\ell - p \tau) e^{-j \nu_\ell t}.
%\end{equation}
%Now we can develop a more general expression for the Fourier coefficients $c_p[k]$ and for the Doppler focus $\Psi_\nu[k]$. Since in this case the math becomes long and cumbersome it is left out of this paper, but the underlying concepts remain valid here as well.

\subsection{Clutter}

To complete this section, we add an analysis of the effects of clutter, considered the next major source of interference after thermal noise. Doppler focusing appears to have inherent clutter rejection capabilities, suggesting that special prefiltering operations such as MTI may not be required.

Clutter refers to unwanted echoes reaching the radar receiver from objects which are not the sought after targets: land, sea, buildings, etc. Due to the relative sizes of the objects, clutter echoes are usually several orders of magnitude stronger than target echoes, so if not treated properly, they can mask the target signals and prevent detection. Furthermore, as opposed to random noise which can be mitigated using coherent integration as in \eqref{eq:SNR5}, clutter echoes are (deterministic) scaled, shifted and modulated replicas of the transmitted signal. They ``enjoy" the benefits of coherent integration in exactly the same way as the target signal does. Therefore, the signal-to-clutter ratio (SCR) cannot be increased by increasing the CPI.

The most common method to allow detection in clutter ridden scenarios, is to utilize the fact that clutter, as opposed to most targets, is mostly static. If we assume the radar transceiver itself is also stationary, then clutter echoes will be received with zero Doppler frequency. This is the reason that classic anti-clutter methods (\eg MTI) are basically a notch filter blocking the Doppler frequency generated by the radar's own motion. We now show that Doppler focusing includes inherent target-clutter Doppler separation, so it does not require any prefiltering or modifications in order to allow target detection when facing clutter.

The Doppler focusing operation \eqref{eq:PsiDef} can be viewed as passing the Xamples $c_p[k]$ through a bandpass filter bank, where each filter has a pass-band of width $2 \pi / P \tau$. The filters' attenuation can be controlled using windowing \eqref{eq:Windowing}, at the cost of increasing the pass-band width. This creates adjustable isolation between delay estimation problems \eqref{eq:CS_single_nu} for targets with Doppler frequencies separated by more than the pass-band width. Therefore, if clutter were to be primarily concentrated around some specific frequency, targets with Doppler frequencies shifted away by more than approximately $2 \pi / P \tau$ could be detected without interference. This emphasizes that increasing the number of pulses $P$ can be used to improve isolation between targets and clutter. In Section~\ref{sec:SimulationResults} we demonstrate via simulation successful target detection using Doppler focusing in a clutter ridden scenario.

\section{Comparison to Previous Approaches}
\label{sec:PreviousApproaches}

In the previous section we described Doppler focusing and analyzed various aspects of its performance. We now compare this performance to two other methods for delay-Doppler estimation.

\subsection{Simultaneous Delay-Doppler Recovery}

Various methods exist which perform simultaneous delay-Doppler estimation \cite{HighResRadar, AdaptiveCSRadar}. These methods discretize the delay-Doppler plane, and construct a CS dictionary with a column for each two dimensional grid point. This approach can be seen as a discrete MF, where as opposed to classic processing, matching is performed to the entire pulse train rather than to a single pulse. Like a MF, these methods perform well in noisy conditions, since they achieve an SNR which scales linearly with $P$.

The critical drawback of this approach is that for any realistic problem size, the dictionary size grows rapidly and becomes too large to store or process. For even moderate size problems, the number of delay or Doppler grid points can easily be on the order of $10^3$, with a number of measurements of similar order. This requires the dictionary to posses $10^9$ elements, and occupy many gigabytes of memory, which requires high-end processing capabilities. Doppler focusing enjoys the same SNR improvement as simultaneous delay-Doppler recovery, with a dictionary whose size depends on delay estimation parameters only and remains fixed for any number of received pulses. We could not compare our method to this type of recovery since the Intel Core i5 PC with 12GB of RAM used in Section~\ref{sec:SimulationResults} could not store the dictionary.

This drawback is not only a problem of memory, but also of computations. We now analyze the computational complexity of Algorithm~\ref{alg:DopplerFocusingWithGrid}, and compare it to \cite{HighResRadar}. In terms of number of samples, Doppler focusing requires $|\kappa|P$ samples while \cite{HighResRadar} samples at the Nyquist rate and therefore requires $B_h \tau P$ samples. The Doppler focusing operation performs $L |\kappa|$ FFTs of length $M$, and then solves $M$ CS delay recovery problems with dictionary size $|\kappa| \times B_h \tau$. If we assume the complexity of a CS problem is proportional to its dictionary size, then the total complexity of Doppler focusing is $O(L |\kappa| M \log(M) + M |\kappa| B_h \tau)$. In \cite{HighResRadar} a single CS delay-Doppler recovery problem is solved, but with dictionary size $(P B_h \tau) \times (P B_h^2 \tau^2)$, so the recovery's complexity is proportional to $B_h^3$. Thus, even if the dictionary could be stored, processing would scale poorly with increasing bandwidth, which is a critical parameter for high-resolution radar.

%which is smaller by a factor of the Doppler grid size, \ie several orders of magnitude.

\subsection{Two-Stage Recovery}

To overcome the problematic scaling of the simultaneous recovery dictionary, two-stage recovery techniques \cite{KfirWaheed, SubNyquistRadar} separate delay and Doppler estimation, performing them sequentially rather than in parallel. A common approach to performing two-stage recovery uses the multiple measurement vector (MMV) framework, as performed in \cite{BlindMultiband} in the context of undersampling of sparse wideband signals. MMV recovery jointly processes \eqref{eq:CS_p} for $0 \leq p < P$ by stacking the sampled Fourier coefficient vectors and sparse target delay vectors as $\mathbf{C} = [\mathbf{c}_0 \cdots \mathbf{c}_{P-1}] \in \mathbb{C}^{|\kappa| \times P}$ and $\mathbf{X} = [\mathbf{x}_0 \cdots \mathbf{x}_{P-1}] \in \mathbb{C}^{N_\tau \times P}$ accordingly, obtaining
\begin{equation}
\mathbf{C} = \frac{1}{\tau} \mathbf{H} \mathbf{V} \mathbf{X},
\end{equation}
where the dictionary $\mathbf{A} = \frac{1}{\tau} \mathbf{H} \mathbf{V}$ is as in \eqref{eq:CS_single_nu_2L}. MMV recovery algorithms (\eg Simultaneous OMP \cite{StructuredCompressedSensing}) exploit the joint sparsity of $\mathbf{X}$, \ie the fact that the support of $\mathbf{x}_p$ remains constant for all $p$, so $\mathbf{X}$ has at most $L$ nonzero rows. This joint sparsity is used by taking the norm of the rows of $\mathbf{A}^T \mathbf{C}$. This operation can be seen as a form of non-coherent integration from radar literature, since the phase information is destroyed by the norm operator. The norm is a non-linear operation which mixes together signal and noise components, so defining a simple SNR measure as we did in \eqref{eq:SNR1} and \eqref{eq:SNR4} is not possible. Non-coherent integration is a common practice in radar and has been analyzed extensively. Several sources develop approximations of the SNR increase for multiple pulses using non-coherent integration: \cite{Minkoff} estimates it at $\sqrt{P}$ for $P>4$, while \cite{Richards, NonCoherentGain} estimates $P^\beta$ where $0.5 < \beta < 0.833$, with $\beta$ decreasing towards 0.5 as $P$ increases. Regardless of the exact value of $\beta$, Doppler focusing, which compensates for the exact phase differences in the signal, generates an SNR increase linear with $P$, better than two-stage recovery. We compare our method to this type of recovery in Section~\ref{sec:SimulationResults}, and show that the difference in SNR is substantial as predicted by the theory.

\section{Simulation Results}
\label{sec:SimulationResults}

We now discuss how the user defined performance metric influences grid size and Fourier coefficient selection, and then show numerical examples comparing our method to other recovery techniques.

\subsection{Performance Metric}

Our problem lies in a continuous, analog world. When we choose to solve it using CS, which is an approach developed for discrete problems, we must discretize the delay grid, denoting grid step as $\Delta_\tau$. As real world targets delays do not lie on any predefined grid, but our CS recovery assumes they do, it seems we should take $\Delta_\tau \rightarrow 0$ in order to minimize quantization errors. Computational requirements aside, there is a significant drawback to such a decrease in grid step - columns in the CS dictionary $\mathbf{A}$ become ever more similar, making it increasingly coherent, where coherence is defined as the largest absolute inner product between any two columns $\mathbf{a_i}$, $\mathbf{a_j}$ of $\mathbf{A}$:
\begin{equation}
\mu(\mathbf{A}) = \underset{i \neq j}{\max} \frac{|\langle \mathbf{a_i}, \mathbf{a_j} \rangle |}{\| \mathbf{a_i} \|_2 \| \mathbf{a_j} \|_2 }.
\end{equation}
A basic premise of CS ties low coherence to successful recovery \cite{StructuredCompressedSensing}. Therefore, $\mathbf{A}$ is usually designed to have small coherence. This contradicts taking the step size to be increasingly small.

Here, we relinquish this basic assumption, and argue that depending on the chosen performance metric, high coherence can actually help recovery instead of harm it. For example, assume we are interested in delay recovery but are tolerant of some small error $\tau_{max}$. In radar applications, a common performance metric is the ``hit-or-miss" criterion on the estimated delays $\{ {\hat{\tau}_\ell} \}_{\ell=0}^{L-1}$:
\begin{align}
\label{eq:hit_or_miss}
\nonumber e_\ell(\hat{\tau}_\ell) &= \left\{
  \begin{array}{l l}
    0, & \quad \text{if $|\tau_\ell - \hat{\tau}_\ell| < \tau_{max}$}\\
    1, & \quad \text{otherwise}\\
  \end{array} \right. \\
e &= \sum_{\ell=0}^{L-1} e_\ell.
\end{align}
Translating $\tau_{max}$ to a condition on support recovery, \eqref{eq:hit_or_miss} tolerates an error of no more than $K = \lfloor \tau_{max} / \Delta_\tau \rfloor \in \mathbb{N}$ places in the recovered indices from \eqref{eq:CS_single_nu_2L}. Instead of designing $\mathbf{A}$ so that each column is as non-correlative with the other columns as possible, we should design the dictionary so that each column is correlative with its $K$ nearest neighbours, and only afterwards does the correlation drop. This will improve recovery performance in noisy scenarios since in cases where the correct column is not recovered, any one of its similar neighbouring columns still has a chance to overcome the noise and produce a ``hit". Graphs in the next subsection will show how such a coherent $\mathbf{A}$ actually improves recovery performance compared with a less coherent dictionary in very noisy scenarios.

If $\tau_{max}$ is chosen so that $K=0$, the same line of thought entails requiring each dictionary column to be similar to 0 neighbours, \ie for $\mathbf{A}$ to have minimal coherence. Since most CS works deal with exact recovery, this explains why they strive for a minimal $\mu(\mathbf{A})$.

We can control the level of $\mathbf{A}$'s coherence by choosing different sets of Fourier coefficient $\kappa$ in \eqref{eq:c_kappa}. Fig.~\ref{fig:coherence} shows an example of the column correlation pattern for two sets of Fourier coefficients: a consecutive set and a random set, where all coefficients were chosen in $[-B_h/2,B_h/2]$. We define the column correlation function for some column $\mathbf{a_i}$ as
\begin{equation}
\mu_i[j] = \frac{|\langle \mathbf{a_i}, \mathbf{a_j} \rangle |}{\| \mathbf{a_i} \|_2 \| \mathbf{a_j} \|_2 }.
\end{equation}
The consecutive set is better suited for performance criteria which allow some error in support recovery, while the random set will achieve better performance when exact recovery is required.

\begin{figure}[ht]
  \centering
    \includegraphics[width=0.7 \columnwidth]{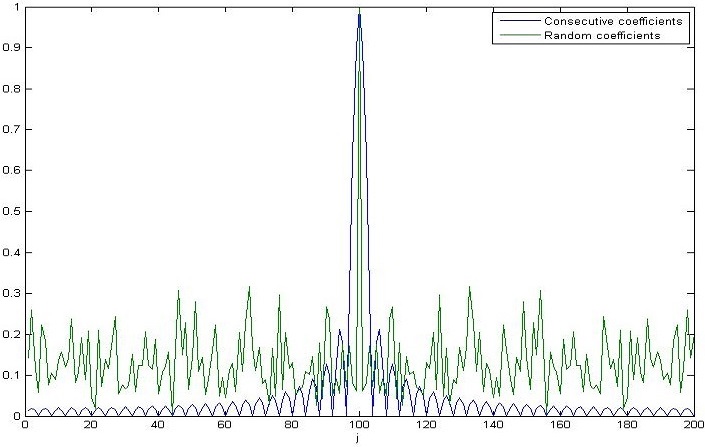}
  \caption{Column correlation pattern $\mu_0[j]$ of the CS dictionary from \eqref{eq:CS_single_nu_2L} for two sets $\kappa$. The consecutive set achieves coherence of 0.9 due to the many correlative columns in the center, while the random set has coherence of 0.3.}
  \label{fig:coherence}
\end{figure}

% We would like to conjecture that the optimal coherence pattern of a CS dictionary $\mathbf{A}$ should match the requested error criterion. For a perfect recovery criterion which tolerates no error, ideally we would have $\mu_i[j] = \delta_{ij}$. Generalizing this, for the ``hit-or-miss" criterion which tolerates a recovery error of up to $K$ places, we would want $\mu_i[j] = \sum_{k=i-K}^{i+K} \delta_{kj}$. Error criteria which increase a penalty as one diverges from the correct index would have also have a matching coherence pattern.

\subsection{Numerical Results}

We now present some numerical experiments illustrating the recovery performance of a sparse target scene. We corrupt the received signal $x(t)$ with an additive white Gaussian noise $n(t)$ with power spectral density $S_n(f)=N_0/2$, bandlimited to $x(t)$'s bandwidth $B_h$. We define the signal to noise power ratio for target $\ell$ as
\begin{equation}
\label{eq:SNR_l}
\mbox{SNR}_\ell=\frac{\frac{1}{T_p}\int_0^{T_p}|\alpha_\ell h(t)|^2 dt}{N_0 B_h},
\end{equation}
where $T_p$ is the pulse time. The scenario parameters used were number of targets $L$=5, number of pulses $P$=100, PRI $\tau$=10$\mu$sec, and $B_h$=200MHz. Target delays and Doppler frequencies are spread uniformly at random in the appropriate unambiguous regions, and target amplitudes were chosen with constant absolute value and random phase. The classic time and frequency resolutions (``Nyquist bins"), defined as $1/B_h$ and $1/P\tau$, are 5$n$sec and 1 KHz accordingly. In order to demonstrate a 1:10 sampling rate reduction, our sub-Nyquist Xampling scheme generated 200 Fourier coefficients per pulse, as opposed to the 2000 Nyquist rate samples. We tested Doppler focusing with two types of Fourier coefficient sets $\kappa$, a consecutive set and a random set. We compared Doppler focusing recovery performance with classic processing and a two-stage recovery method as described in \cite{KfirWaheed} (where we use a CS algorithm instead of ESPRIT) using the following criteria:

\begin{enumerate}
\item \textbf{Hit-Rate} -- we define a ``hit" as a delay-Doppler estimate which is circumscribed by an ellipse around the true target position in the time-frequency plane. We used an ellipse with axes equivalent to $\pm3$ times the time and frequency Nyquist bins.
\item \textbf{Recovery RMS error} -- for ``hits", we measure the root mean square error in both time and frequency.
\end{enumerate}

As noted in the previous section, a single stage CS recovery method using Nyquist bins spacing consumes a prohibitive amount of memory and was not able to run on any computer at hand, since the CS dictionary required storing $4 \cdot 10^9$ elements (occupying $32GB$ of memory using standard IEEE double precision): $\frac{2\pi/\tau}{2\pi/P\tau} \frac{\tau}{1/B_h} = P\tau B_h = 2 \cdot 10^5$ columns and $P \tau B_h / 10 = 2 \cdot 10^4$ measurements per column.

\begin{figure}[ht]
  \centering
    \includegraphics[width=0.8 \columnwidth]{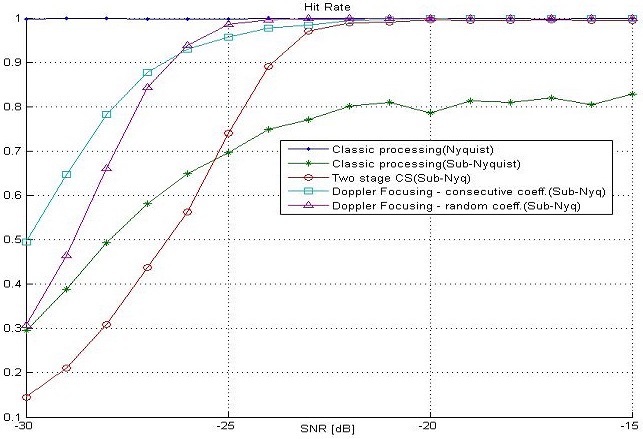}
  \caption{Hit Rate for classic processing, two-stage CS recovery and Doppler focusing. Sub-Nyquist sampling rate was one tenth the Nyquist rate.}
  \label{fig:HitRate_SNR}
\end{figure}

\begin{figure}
  \centering
    \includegraphics[width=0.8 \columnwidth]{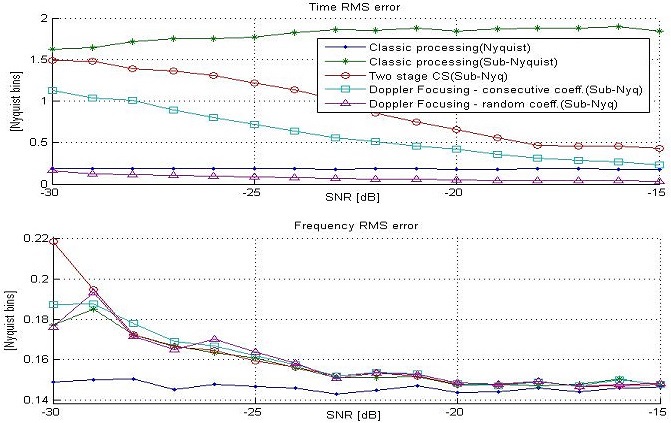}
  \caption{RMS error of time and frequency estimates for classic processing, two-stage CS recovery and Doppler focusing. Sub-Nyquist sampling rate was one tenth the Nyquist rate.}
  \label{fig:RMS_SNR}
\end{figure}

For CS-based techniques, the delay grid step $\Delta_\tau$ was chosen as half a Nyquist bin. For Doppler focusing, the Doppler frequency region was discretized with uniform steps of half a Nyquist bin. To provide a fair comparison, classic processing was performed using identical bin sizes.

Fig.~\ref{fig:HitRate_SNR} and Fig.~\ref{fig:RMS_SNR} demonstrate the hit-rate and RMS error performance of the different recovery methods for various SNR values. It is evident that Doppler focusing is superior to the other sub-Nyquist recovery techniques. Between the two Doppler focusing approaches, consecutive coefficients are better suited for lower SNR, while choosing coefficients randomly improves performance as SNR increases. Since both sets $\kappa$ produce CS dictionaries with column correlation functions which are not matched to the ``hit-or-miss" performance criteria used, we have no reason to assume one should be better than the other. Also, random coefficients, when producing a hit, have very small delay errors (even compared with Nyquist rate classic processing) due to low CS dictionary coherence. As opposed to Doppler focusing recovery performance which decreases gracefully with sample rate reduction, classic processing suffers significantly when sample rate is reduced below Nyquist.

Fig.~\ref{fig:HitRate_SNR_2} shows the same hit rate graph for classic processing, but this time the waveform used for Doppler focusing had its CTFT adjusted so that energy was transferred from frequencies which were not sampled, to those that were. This was performed by passing the signal through a low pass filter and rescaling its amplitude so target SNR \eqref{eq:SNR_l} remains constant. Since Doppler focusing imposes no restrictions on the transmitter, we can use a signal with the same total energy, but have it spread out in a manner which is more favorable to the frequency domain sampling used in Xampling. Since performance for Doppler focusing improves significantly, we are able to obtain excellent recovery results at much lower SNR values, surpassing classic processing which uses ten times as many samples. This shows that the performance degradation caused by a sub-Nyquist sampling rate can be compensated for using a suitable transmitter. The drawback of using such a narrowband signal can be seen in Fig.~\ref{fig:narrowband}, where we examine resolution in terms of the ability to separate two closely spaced identical targets with equal Doppler frequencies. We see that for very close targets, classic processing using a wideband signal is able to distinguish the two targets far better than Doppler focusing recovery using a narrowband signal.

\begin{figure}[ht]
  \centering
    \includegraphics[width=0.8 \columnwidth]{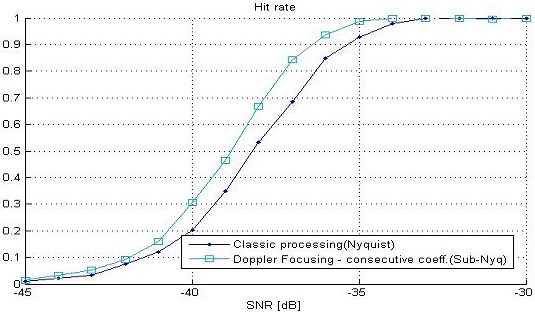}
  \caption{Hit Rate for classic processing and Doppler focusing at one tenth the Nyquist rate, where the waveform used for Doppler focusing had its entire energy contents concentrated in the sampled frequencies.}
  \label{fig:HitRate_SNR_2}
\end{figure}

\begin{figure}[ht]
  \centering
    \includegraphics[width=0.9 \columnwidth]{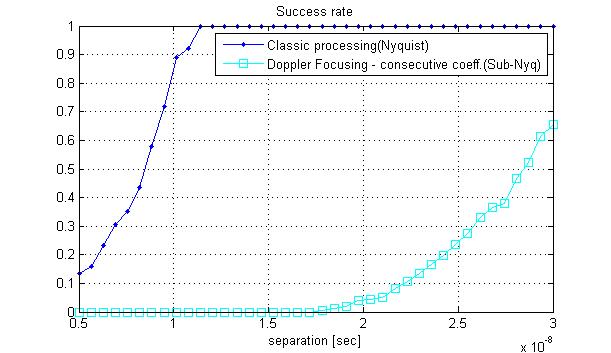}
  \caption{Probability to distinguish two separate closely spaced identical targets, where classic processing uses a tenfold wideband signal compared with Doppler focusing.}
  \label{fig:narrowband}
\end{figure}

Fig.~\ref{fig:map} shows the sparse target scene on a time-frequency map for a -28dB SNR scenario, where each target is displayed along with its hit rate ellipse, together with the various sub-Nyquist recovery methods' estimates and hit rates. As noted in Section~\ref{sec:DopplerFocusing}, only Doppler focusing is able to distinguish between the two targets having almost identical delays (around 4.2 $\mu$sec) but different Doppler frequencies.

\begin{figure}[ht]
  \centering
    \includegraphics[width=0.8 \columnwidth]{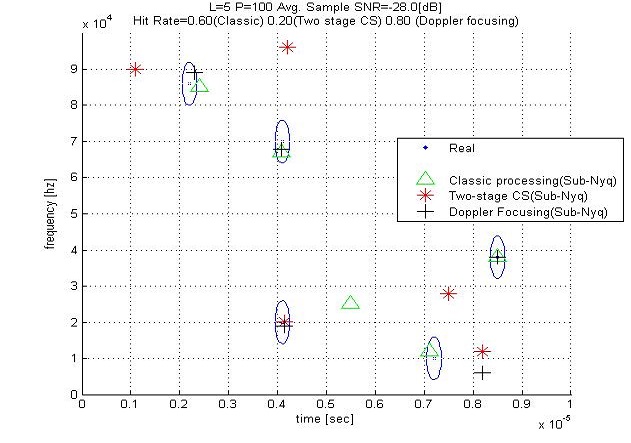}
  \caption{Real target positions along with various estimates. Doppler focusing achieves highest hit rate among sub-Nyquist methods. Only Doppler focusing detects two targets around 4.2$\mu$sec. (Hit rate ellipses were enlarged for clarity.)}
  \label{fig:map}
\end{figure}

Regarding target dynamic range, Fig.~\ref{fig:dynamic_range} demonstrates the advantage of using Doppler focusing to resolve closely spaced targets with different powers. In this scenario, two targets satisfying $20\log_{10}(|\alpha_1| / |\alpha_2|) = 20 [dB]$ are placed adjacently, so that their hit rate ellipses intersect. Doppler focusing based recovery at one tenth the Nyquist rate generates two hits, while MF processing at Nyquist rate recovers only one of the targets.

\begin{figure}[ht]
\centering
    \includegraphics[width=0.8 \columnwidth]{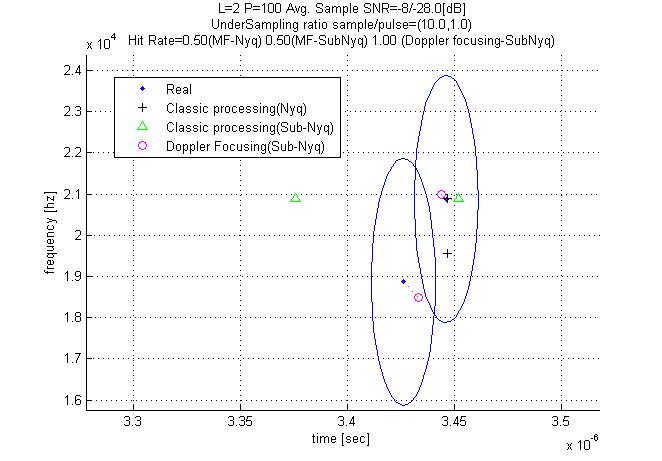}
\caption{Target scene composed of two closely spaced targets, the target on the left 20dB more powerful than the target on the right. MF processing at both Nyquist and one tenth the Nyquist rate recovers only one of the two targets, while Doppler focusing recovers both. No windowing was used ($w[p]=1$).}
\label{fig:dynamic_range}
\end{figure}

\subsection{Clutter}

We demonstrate the robustness of Doppler focusing to clutter with the following scenario. We simulate nine targets with Doppler frequencies spread uniformly in $[-\pi/\tau, \pi/\tau]$. Clutter is modeled as 4000 Swerling-0 scatterers, with Doppler frequencies distributed uniformly in a single Nyquist bin around DC, to allow for small relative velocity, \eg waves in the sea. Clutter scatterers are distributed uniformly at random at all delays in $[0,\tau]$. The SCR is -50dB, as can be seen in Fig.~\ref{fig:clutter_coeff}, and the targets' SNR \eqref{eq:SNR_l} is -25dB. All other system parameters are as described in the previous subsection.

\begin{figure}[ht]
\centering
    \includegraphics[width=0.7 \columnwidth]{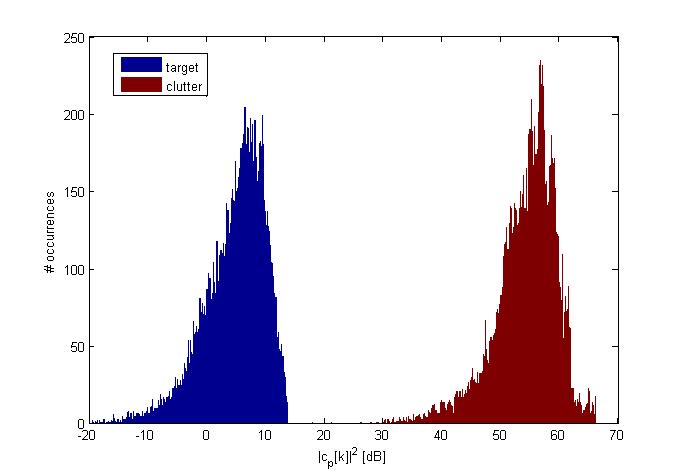}
\caption{Histogram of target vs. clutter Xamples $c_p[k]$. Average SCR for a single Xample is -50dB.}
\label{fig:clutter_coeff}
\end{figure}

We perform detection using Algorithm~\ref{alg:DopplerFocusingWithGrid}, ignoring the clutter ridden DC Doppler bin, and due to leakage effects, also its two nearest neighbours on each side. Since for this very low SCR, clutter sidelobes still cover the targets, we use a Taylor window with -50dB attenuation to improve Doppler frequency isolation. The recovered target scene with and without windowing is shown in Fig.~\ref{fig:clutter_detection}. This simulation shows that any interference which is localized in Doppler, can be manipulated using \eqref{eq:Windowing}, so it has a negligible effect on targets in other Doppler frequencies.

\begin{figure}[ht]
\centering
    \includegraphics[width=0.8 \columnwidth]{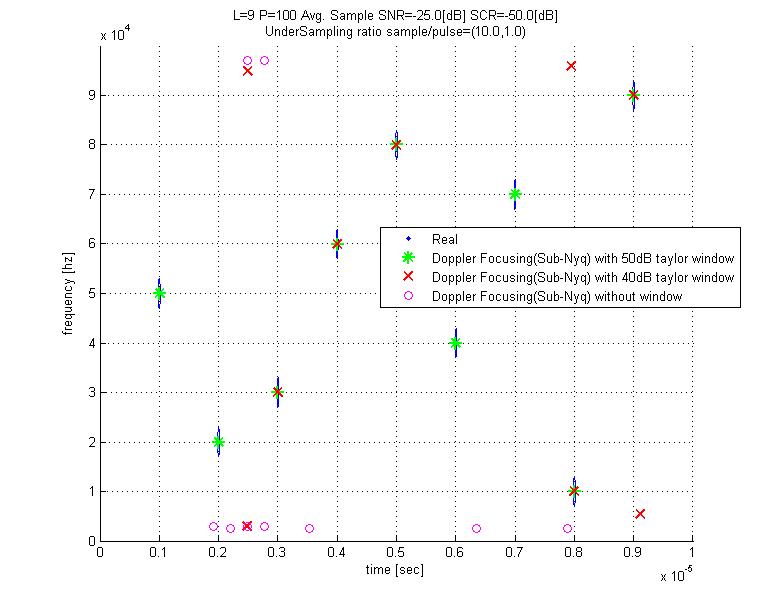}
\caption{Recovered target scene with nine targets and almost static clutter. Without windowing clutter sidelobes permeate the nonzero Doppler frequency area and cause misdetections. With 40dB windowing five out of nine targets are ``hits". With 50dB windowing the entire scene is detected correctly.}
\label{fig:clutter_detection}
\end{figure}

\section{Radar Experiment}
\label{sec:experimental}

In this section we present a real experiment of our radar receiver hardware prototype. Our setup includes a custom made sub-Nyquist radar receiver board which implements sub-Nyquist Xampling and digital recovery using Doppler focusing, while the analog input signal \eqref{eq:x_received} was synthesised using National Instruments hardware. The RF front end and board we use are identical to the ones in \cite{EliGal}, but the digital recovery method accounting for target Doppler frequencies is different. Additional information regarding the system's hardware and configuration issues can be found in \cite{EliGal}.

The experimental process consists of the following steps. We begin by using the AWR software, which enables us to examine a large variety of scenarios, comprised of different target parameters, \ie delays, Doppler frequencies and amplitudes. With the AWR software we simulate the complete radar scenario, including the pulse transmission and accurate power loss due to wave propagation in a realistic medium. The AWR also contains a model of a realistic RF receiver, which simulates the demodulation of the RF signal to IF frequencies, and saves the output to a file. The simulation result is loaded to the AWG module, which produces an analog signal, that enters the radar receiver board and is processed and sampled as described in \cite{EliGal}. The samples are fed into the chassis’ controller and a MATLAB function is launched that runs the Doppler focusing reconstruction algorithm. Our system contains a fully detailed interface implemented in the LabView environment, which allows simple activation of the process. Various target scenes, with different target delays, Doppler frequencies and amplitudes, are recovered successfully using this setup. A screenshot of the interface is depicted in Fig.~\ref{fig:scenes}.

This experimental prototype proves that the sub-Nyquist methodology described in this paper is actually feasible in practice. The recovery method proposed here not only describes digital recovery, but also addresses the problem of sampling the analog signal at a low rate, in a way which is feasible with standard RF hardware.

\begin{figure}[ht]
\centering
    \includegraphics[width=1.0 \columnwidth]{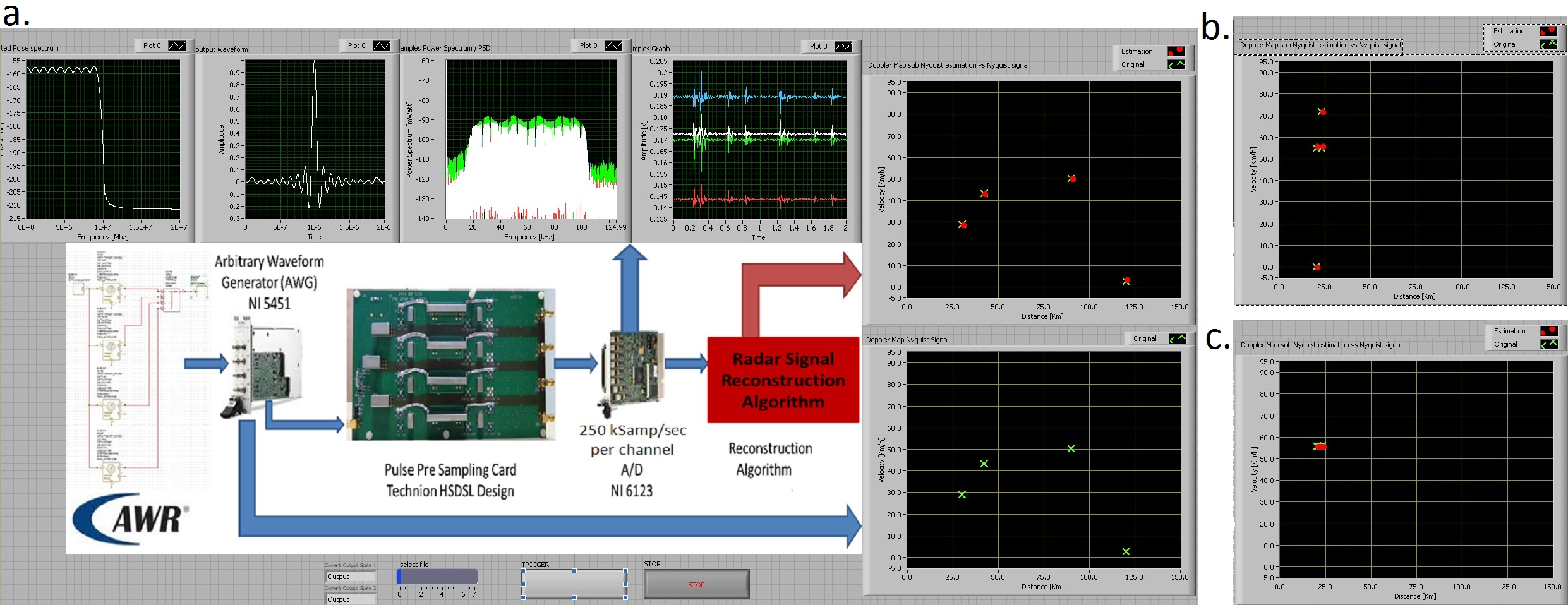}
\caption{The LabView experimental interface. \textbf{a.} From left to right: $H(\omega)$, $h(t)$, the frequency response of each channel, the 4 signals detected in each channel, at the top- the reconstructed target scene, at the bottom- the original target scene. On the right two additional target scenes, with Doppler focusing based recovery successful in both cases: \textbf{b.} All four targets having very closely spaced delays, with two of them also having close Doppler frequencies. \textbf{c.} All four targets have very similar delays and Doppler frequencies.}
\label{fig:scenes}
\end{figure}

\section{Conclusion}
\label{sec:Conclusion}

We demonstrated a radar sampling and recovery method called Doppler focusing, which employs the techniques of Xampling and CS, and is independent of the radar signal's bandwidth. Doppler focusing allows for low rate sampling and digital processing, and imposes no constraints on transmitted signal. It also leads to CS recovery with dictionary size scaling with delay grid size only, and provides SNR scaling which is linear in the number of received pulses, identical to an optimal MF. We compared our method to other sub-Nyquist recovery techniques and have seen its clear advantage in simulations. When sampling at one tenth the Nyquist rate, and for SNR above -25dB, Doppler focusing achieves results almost equal to classic recovery working at the Nyquist rate.

We are currently working on enhancing Doppler focusing to handle the case of an unknown number of targets, and increasing the algorithm's dynamic range by improving treatment of strong targets and the sidelobes they introduce.

% A research direction we are currently pursuing is trying to find a lower bound on the sample rate required to achieve a certain measure of performance (\eg hit rate), as a function of SNR.

%\appendices
%\section{Assumptions}
%\label{app:assumptions}

\section*{Acknowledgment}
The authors would like to thank the anonymous reviewers for their constructive remarks, which brought to our attention several important issues and helped improve the paper.

\bibliographystyle{IEEEtran}
\bibliography{IEEEabrv,my_references}

\end{document}